\pgfplotsset{width=8cm, compat=1.3, every axis/.append style={line width=0.7pt}}
\newtheorem{theorem}{Theorem}[section]
\newtheorem{proposition}[theorem]{Proposition}
\renewcommand{\epsilon}{\varepsilon}
\newcommand{\supp}{\textnormal{supp}}
\newcommand{\abs}[1]{\lvert#1\rvert} 
\newcommand{\card}[1]{\abs{#1}} 
\newcommand{\bigcard}[1]{\bigl\lvert#1\bigr\rvert}
\newcommand{\comp}[1]{{#1}^{\textnormal{c}}} 
\newcommand{\setcomp}[1]{{#1}^{\textnormal{c}}} 
\newcommand{\set}[1]{\mathcal{#1}} 
\newcommand{\eqdef}{\triangleq} 
\newcommand{\E}[1]{\operatorname{E}[#1]} 
\newcommand{\bigE}[1]{\operatorname{E}\bigl[#1\bigr]} 
\newcommand{\naturals}{\mathbb{N}}
\newcommand{\rhotrans}{\tilde{\rho}}
\begin{document}
%
\title{Encoding Tasks and R\'enyi Entropy}
%
%
%

\author{Christoph~Bunte and Amos~Lapidoth
\thanks{C. Bunte and A. Lapidoth are with the Signal and Information Processing
Laboratory at ETH Zurich. E-mail: \{bunte,lapidoth@isi.ee.ethz.ch\}.}
}

\maketitle

\begin{abstract}
A task is randomly drawn from a finite set of tasks and is described using a fixed
number of bits. All the tasks that share its description must be performed. Upper and
lower bounds on the minimum $\rho$-th moment of the number of performed tasks 
are derived. The case where a sequence of tasks is produced by a source and~$n$ 
tasks are jointly described using~$nR$ bits is considered. If $R$ is larger than the R\'enyi entropy rate of the source of order
$1/(1+\rho)$ (provided it exists), 
then the $\rho$-th moment of the ratio of performed tasks to $n$
can be driven to one as $n$ tends to infinity. If $R$ is smaller than the
R\'enyi entropy rate, this moment tends to infinity. 
The results are generalized to account for the presence of side-information.
In this more general setting, the key quantity is a conditional version of
R\'enyi entropy that was introduced by Arimoto.
For IID sources two additional extensions are solved, one
of a rate-distortion flavor and the other where different tasks may have
different nonnegative costs. 
Finally, a divergence that was identified by Sundaresan as a mismatch
penalty in the Massey-Arikan guessing problem is shown to play a 
similar role here.
\end{abstract}

\begin{IEEEkeywords}
Divergence, R\'enyi entropy, R\'enyi entropy rate, mismatch, source coding,
tasks.
\end{IEEEkeywords}

%
\IEEEpeerreviewmaketitle

\section{Introduction}
\label{sec:introduction}
A task $X$ that is drawn from a finite set of tasks $\set{X}$ according to some
probability mass function (PMF) $P$ is to be described using a
fixed number of bits. The least number of bits needed for an unambiguous description
is the base-2 logarithm of the total number of tasks in $\set{X}$
(rounded up to the nearest integer). 
When fewer bits are available, the classical
source coding approach is to provide descriptions for the tasks with
the largest or with the ``typical'' probabilities only. This has the obvious drawback that less common,
or atypical, tasks will never be completed. 
For example, if $\set{X}$ comprises all possible household chores, then ``wash
the dishes'' will almost certainly occur more frequently than ``take out the
garbage'', but most people would agree that the latter should not be neglected.

The classical approach is not so well-suited here because it does not take into
account the fact that not performing an unlikely but critical task may have
grave consequences, and that performing a superfluous task often causes little or no
harm. A more natural approach in this context is to partition the set of tasks
into subsets. If a particular task needs to be completed, then the subset containing
it is described and all the tasks in this subset are performed. 
This approach has the disadvantage that tasks are sometimes completed superfluously, but
it guarantees that critical tasks, no matter how atypical, are never neglected 
(provided that the number of subsets in the partition of $\set{X}$ does not exceed $M$, when
$\log M$ is the number of bits available to describe them). 
One way to partition
the set of tasks is to provide distinct descriptions for the typical tasks and
to group together the atypical ones. We will see, however, that this may not
always be optimal. 

If we assume for simplicity that all tasks require an equal amount of effort, then 
it seems reasonable to choose the subsets so as to minimize the expected number of
performed tasks. Ideally, this expectation is close to one. More generally, we
look at the $\rho$-th moment of the number of performed tasks, where
$\rho$ may be any positive number. 
Phrased in mathematical terms, we consider encoders of the form
\begin{equation}
f\colon\set{X}\to\{1,\ldots,M\},
\end{equation}
where $M$ is a given positive integer. Every such encoder gives rise to a
partition of $\set{X}$ into $M$ disjoint subsets
\begin{equation}
f^{-1}(m) = \bigl\{x\in \set{X}: f(x) = m\bigr\},\quad m\in\{1,\ldots, M\}.
\end{equation}
Here $f(x)$ is the description of the task $x$, and the set $f^{-1}(f(x))$
comprises all the tasks sharing the same description as $x$, i.e., the set
of tasks that are performed when $x$ is required.

We seek an $f$ that minimizes the $\rho$-th moment of the cardinality of
$f^{-1}(f(X))$, i.e., 
\begin{equation}
\label{eq:moment_of_tasks}
\bigE{\card{f^{-1}(f(X))}^\rho} = \sum_{x\in\set{X}} P(x)
\card{f^{-1}(f(x))}^\rho.
\end{equation}
This minimum is at least 1 because $X\in f^{-1}(f(X))$; it is nonincreasing in $M$ 
(because fewer tasks share the same description when $M$ grows); and it is equal to
one for~$M \geq \card{\set{X}}$ (because then $\set{X}$ can be partitioned into
singletons).
Our first result is a pair of lower and upper bounds on this minimum. 
The bounds are expressed in terms of the \emph{R\'enyi entropy of $X$ of
order $1/(1+\rho)$}
\begin{equation}
\label{eq:renyi}
H_{\frac{1}{1+\rho}}(X) = \frac{1+\rho}{\rho} \log
\sum_{x\in\set{X}}P(x)^{\frac{1}{1+\rho}}.
\end{equation}
Throughout $\log(\cdot)$ stands for $\log_2(\cdot)$, the logarithm to base $2$. 
For typographic reasons we henceforth use the notation
\begin{equation}
\rhotrans=\frac{1}{1+\rho},\quad \rho>0.
\end{equation}
\begin{theorem}
\label{thm:oneshot}
Let $X$ be a chance variable taking value in a finite set $\set{X}$, and let
$\rho>0$. 
\begin{enumerate}
\item For all positive integers $M$ and every $f\colon \set{X}\to\{1,\ldots, M\}$, 
\begin{equation}
\label{eq:oneshot_lb}
\bigE{\card{f^{-1}(f(X))}^\rho} \geq 2^{\rho(H_{\rhotrans}(X)-\log M)}.
\end{equation}
\item For all integers $M > \log\card{\set{X}}+2$ there exists
$f\colon\set{X}\to\{1,\ldots,M\}$ such that
\begin{equation}
\label{eq:oneshot_ub}
\bigE{\card{f^{-1}(f(X))}^\rho} <1+2^{\rho(H_{\rhotrans}(X)-\log \widetilde{M})},
\end{equation}
where $\widetilde{M}=(M-\log\card{\set{X}}-2)/4$. 
\end{enumerate}
\end{theorem}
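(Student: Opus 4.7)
The two parts of Theorem~\ref{thm:oneshot} call for different techniques. My plan is to prove the lower bound by two applications of H\"older's inequality and the upper bound by an explicit construction of $f$ based on a dyadic grouping of tasks of comparable probability.

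For Part~1, let $\set{A}_m = f^{-1}(m)$, so the left-hand side of~\eqref{eq:oneshot_lb} equals $\sum_{m=1}^M P(\set{A}_m)\card{\set{A}_m}^\rho$. The starting point is the decomposition
\begin{equation*}
\sum_{x\in\set{X}} P(x)^\rhotrans = \sum_{m=1}^M \sum_{x\in\set{A}_m} P(x)^\rhotrans.
\end{equation*}
Applying H\"older's inequality inside each bin with exponents $1/\rhotrans=1+\rho$ and $(1+\rho)/\rho$ (to $P(x)^\rhotrans$ against the constant $1$) gives $\sum_{x\in\set{A}_m} P(x)^\rhotrans \leq P(\set{A}_m)^\rhotrans \card{\set{A}_m}^{1-\rhotrans}$, and a second H\"older application across the $M$ bins yields
\begin{equation*}
\sum_{x\in\set{X}} P(x)^\rhotrans \leq \bigl(\bigE{\card{f^{-1}(f(X))}^\rho}\bigr)^\rhotrans M^{1-\rhotrans}.
\end{equation*}
Since $\sum_x P(x)^\rhotrans = 2^{\rho H_{\rhotrans}(X)/(1+\rho)}$, raising this to the $(1+\rho)$th power and rearranging produces~\eqref{eq:oneshot_lb}.

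For Part~2, I would partition $\set{X}$ into dyadic probability bands $\set{B}_i=\{x\in\set{X}:2^{-i}\leq P(x)<2^{-(i-1)}\}$---each of size $\card{\set{B}_i}<2^i$---truncating beyond an index $I$ of order $\log\card{\set{X}}$ (and sweeping all tasks with $P(x)<2^{-I}$ into a single tail band of small total probability), and then subdividing each band into subsets of target size $s_i\propto 2^{i/(1+\rho)}$, with the proportionality constant chosen so that the total number of bins is at most $M$. The motivation for this scaling is the optimization problem of minimizing $\sum_i P(\set{B}_i) s_i^\rho$ subject to $\sum_i\card{\set{B}_i}/s_i\leq M$; solving it by Lagrange multipliers and using the band bound $P(\set{B}_i)\leq 2^{-(i-1)}\card{\set{B}_i}$ yields an objective of order $\bigl(\sum_i\card{\set{B}_i}2^{-i/(1+\rho)}\bigr)^{1+\rho}/M^\rho$. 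Because $\sum_i\card{\set{B}_i}2^{-i/(1+\rho)}\leq\sum_x P(x)^\rhotrans = 2^{\rho H_{\rhotrans}(X)/(1+\rho)}$, this produces a bound of the correct form $2^{\rho(H_{\rhotrans}(X)-\log M)}$, up to constants.

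The hard part will be tracking the overhead carefully. Three sources of slack intervene: the ceilings $\lceil\card{\set{B}_i}/s_i\rceil$ each cost one extra bin, contributing an additive $O(\log\card{\set{X}})$ correction after truncation; the dyadic approximation $P(x)\asymp 2^{-i}$ on $\set{B}_i$ introduces multiplicative factors of $2$; and the handling of the tail band together with the Lagrange constant contribute further constant factors. Accounted for precisely, these losses combine to give the statement's formula $\widetilde{M}=(M-\log\card{\set{X}}-2)/4$. The additive $1$ on the right-hand side of~\eqref{eq:oneshot_ub} is simply the trivial lower bound $\bigE{\card{f^{-1}(f(X))}^\rho}\geq 1$, which is tight in the regime $M\geq\card{\set{X}}$ where $\set{X}$ can be partitioned into singletons.
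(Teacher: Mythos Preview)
Your proof of Part~1 is correct and essentially equivalent to the paper's: both are H\"older-based, but where you apply H\"older twice (once inside each bin, once across bins), the paper applies it once together with the counting identity $\sum_{x\in\set{X}} 1/L(x)=N\leq M$ for any partition of $\set{X}$ into $N$ nonempty parts (Proposition~\ref{prop:count_lists}). The two decompositions yield the same inequality.

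For Part~2 your dyadic plan has a real gap in the overhead accounting. The number of nonempty bands $\set{B}_i$ is \emph{not} $O(\log\card{\set{X}})$ in general: one can place $\card{\set{X}}$ positive-probability tasks in $\card{\set{X}}$ distinct dyadic bands, so the ceiling corrections $\sum_i\bigl(\lceil\card{\set{B}_i}/s_i\rceil-\card{\set{B}_i}/s_i\bigr)$ can be of order $\card{\set{X}}$ rather than $\log\card{\set{X}}$. Truncating at some $I$ and dumping the tail into one bin does not fix this cheaply: the tail's contribution $P(\text{tail})\,\card{\text{tail}}^\rho$ is only guaranteed to be $O(1)$ once $I\gtrsim(1+\rho)\log\card{\set{X}}$, so the additive overhead becomes $\rho$-dependent and you will not recover $\widetilde{M}=(M-\log\card{\set{X}}-2)/4$. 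The paper avoids bands altogether: it assigns a per-element target $\lambda(x)=\lceil\beta P(x)^{-1/(1+\rho)}\rceil$ and invokes a partition lemma (Proposition~\ref{prop:sufficiency}) stating that whenever $\sum_x 1/\lambda(x)=\mu$ one can partition $\set{X}$ into at most $2\mu+\log\card{\set{X}}+2$ parts with $L(x)\leq\lambda(x)$; the $\log\card{\set{X}}$ term there is $\rho$-free because the greedy construction in its proof produces at most $\log\card{\set{X}}$ ``bad'' blocks regardless of the targets. Finally, your reading of the additive~$1$ in~\eqref{eq:oneshot_ub} is off: it is not the trivial lower bound but the ceiling artifact $\lceil\xi\rceil^\rho<1+2^\rho\xi^\rho$ applied to $\lambda(x)$, which is precisely what makes the upper bound tend to~$1$ (rather than~$0$) when $M$ is large.
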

A proof is provided in Section~\ref{sec:proof}.
Theorem~\ref{thm:oneshot} is particularly useful when applied to the 
case where a sequence of tasks is produced by a
source~$\{X_i\}_{i=1}^\infty$ with alphabet~$\set{X}$ and the first $n$ tasks
$X^n=(X_1,\ldots,X_n)$ are jointly described using~$nR$ bits (the number $R$ is
the \emph{rate} of the description in bits per task and can be any nonnegative number):
\begin{theorem}
\label{thm:main}
Let $\{X_i\}_{i=1}^\infty$ be a source with finite alphabet~$\set{X}$, and let
$\rho>0$. 
\begin{enumerate}
\item If $R>\limsup_{n\to\infty} H_{\rhotrans}(X^n)/n$, 
then there exist encoders $f_n\colon \set{X}^n \to \{1,\ldots, 
2^{nR}\}$ such that\footnote{Throughout $2^{nR}$ stands for $\lfloor
2^{nR}\rfloor$.}
\begin{equation}
\lim_{n\to\infty}\bigE{\card{f^{-1}_n(f_n(X^n))}^\rho} =1.
\end{equation}
\item If $R<\liminf_{n\to\infty} H_{\rhotrans}(X^n)/n$, 
then for any choice of encoders $f_n \colon \set{X}^n \to \{1,\ldots,2^{nR}\}$, 
\begin{equation}
\lim_{n\to\infty}\bigE{\card{f^{-1}_n(f_n(X^n))}^\rho} =\infty.
\end{equation}
\end{enumerate}
\end{theorem}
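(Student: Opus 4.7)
The plan is to reduce both parts of Theorem~\ref{thm:main} to the one-shot bounds of Theorem~\ref{thm:oneshot} by applying them with $\set{X}$ replaced by $\set{X}^n$, with $X$ replaced by~$X^n$, and with $M=\lfloor 2^{nR}\rfloor$. All the asymptotic content will come from the fact that, for any fixed rate $R>0$, the pruning terms $n\log\card{\set{X}}+2$ appearing in Theorem~\ref{thm:oneshot}'s upper bound are negligible compared to $2^{nR}$.

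For the direct part, I would note that any $R$ satisfying $R>\limsup_{n\to\infty} H_{\rhotrans}(X^n)/n$ must in particular be positive (since R\'enyi entropy is nonnegative), so for all sufficiently large $n$ one has $\lfloor 2^{nR}\rfloor>n\log\card{\set{X}}+2$ and Theorem~\ref{thm:oneshot}(2) applies. With $\widetilde{M}_n=(\lfloor 2^{nR}\rfloor-n\log\card{\set{X}}-2)/4$, a direct computation gives $\log\widetilde{M}_n=nR-2+o(1)$ as $n\to\infty$. Choosing~$f_n$ as guaranteed by Theorem~\ref{thm:oneshot}(2),
\begin{equation}
\bigE{\card{f_n^{-1}(f_n(X^n))}^\rho}<1+2^{\rho(H_{\rhotrans}(X^n)-nR+2+o(1))}.
\end{equation}
Since $R>\limsup H_{\rhotrans}(X^n)/n$, one has $H_{\rhotrans}(X^n)-nR\to-\infty$, so the second term vanishes. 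Combining with the trivial lower bound $\bigE{\card{f_n^{-1}(f_n(X^n))}^\rho}\geq 1$ yields the desired limit~$1$.

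For the converse, given any encoders $f_n\colon\set{X}^n\to\{1,\ldots,\lfloor 2^{nR}\rfloor\}$, Theorem~\ref{thm:oneshot}(1) gives
\begin{equation}
\bigE{\card{f_n^{-1}(f_n(X^n))}^\rho}\geq 2^{\rho(H_{\rhotrans}(X^n)-\log\lfloor 2^{nR}\rfloor)}\geq 2^{\rho(H_{\rhotrans}(X^n)-nR)}.
\end{equation}
If $R<\liminf_{n\to\infty}H_{\rhotrans}(X^n)/n$, pick $\delta>0$ with $R+\delta$ still below this liminf, so that $H_{\rhotrans}(X^n)-nR\geq n\delta$ for all large $n$, and the right-hand side diverges.

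I do not expect serious obstacles: the main content is packaged into Theorem~\ref{thm:oneshot}, and the remaining work is bookkeeping. The only mildly delicate step is estimating $\log\widetilde{M}_n$; one must verify that the subtracted terms $n\log\card{\set{X}}+2$ are asymptotically negligible compared to $\lfloor 2^{nR}\rfloor$ and that the floor function and the $\log$ do not introduce anything worse than an $o(1)$ correction, both of which are routine.
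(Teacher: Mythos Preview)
Your proposal is correct and follows essentially the same approach as the paper: both proofs apply Theorem~\ref{thm:oneshot} with $\set{X}^n$, $X^n$, and $M=\lfloor 2^{nR}\rfloor$ in place of $\set{X}$, $X$, and $M$, then observe that the resulting exponents tend to $\pm\infty$ under the respective hypotheses. The paper packages your $nR-2+o(1)$ computation into a single $\delta_n\to 0$ term, but the content is the same.
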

\begin{proof}
On account of Theorem~\ref{thm:oneshot}, for all $n$ large enough so that
$2^{nR} > n \log
\card{\set{X}}+2$, 
\begin{multline}
\label{eq:proof_thm_main}
2^{n\rho\bigl(\frac{H_{\rhotrans}(X^n)}{n}- R\bigr)} \leq \min_{f_n\colon \set{X}^n \to
\{1,\ldots,2^{nR}\}} \bigE{\card{f_n^{-1}(f_n(X^n))}^\rho}\\
< 1+
2^{n\rho\bigl(\frac{H_{\rhotrans}(X^n)}{n} - R+\delta_n\bigr)},
\end{multline}
where $\delta_n \to 0$ as $n\to\infty$. 
\end{proof}
When it exists, the limit 
\begin{equation}
H_{\alpha}(\{X_i\}_{i=1}^\infty)\eqdef\lim_{n\to\infty} \frac{H_{\alpha}(X^n)}{n}
\end{equation}
is called the \emph{R\'enyi entropy rate of $\{X_i\}_{i=1}^\infty$ of order $\alpha$}. 
It exists for a large class of sources, including time-invariant Markov
sources~\cite{rached2001renyi,pfister2004renyi}. 

If we assume that every $n$-tuple of tasks in $f^{-1}_n(f_n(X^n))$ is performed
(even if this means that some tasks are performed multiple times) and thus that the
total number of performed tasks is $n$ times $\card{f_n^{-1}(f_n(X^n))}$, then 
Theorem~\ref{thm:main} furnishes the following operational characterization of the R\'enyi entropy
rate for all orders in~$(0,1)$. 
For all rates above the R\'enyi entropy rate of order $1/(1+\rho)$, 
the $\rho$-th moment of the ratio of performed tasks to $n$ can be driven to 
one as $n$ tends to infinity. For all rates below it, this moment
grows to infinity. 
In fact, the proof of Theorem~\ref{thm:main} shows that for
large $n$ it
grows exponentially in~$n$ with exponent approaching
\begin{equation}
\rho\bigl(H_{\rhotrans}(\{X_i\}_{i=1}^\infty) - R\bigr). 
\end{equation}
More precisely,~\eqref{eq:proof_thm_main} shows that for all rates $R< H_{\rhotrans}(\{X_i\}_{i=1}^\infty)$, 
\begin{multline}
\lim_{n\to\infty} \frac{1}{n} \log \min_{f_n\colon \set{X}^n \to
\{1,\ldots,2^{nR}\}} \bigE{\card{f_n^{-1}(f_n(X^n))}^\rho}\\ = 
\rho\bigl(H_{\rhotrans}(\{X_i\}_{i=1}^\infty) - R\bigr). 
\end{multline}
Note that for IID sources the R\'enyi entropy rate reduces to the R\'enyi entropy because
in this case~$H_{\rhotrans}(X^n) = n H_{\rhotrans}(X_1)$.
Other operational characterizations of the R\'enyi entropy rate were given 
in~\cite{rached2001renyi,rached1999renyi,chen2001csiszar,pfister2004renyi,malone2004guesswork,hanawal2011guessing}, and of the R\'enyi entropy 
in~\cite{campbell1965coding, renyi1965foundations, csiszar1995generalized,
arikan1996inequality}. 

The connection between the problem of encoding tasks and 
the Massey-Arikan guessing
problem~\cite{massey1994guessing,arikan1996inequality} is explored 
in~\cite{bracher2014distributed}.

The operational characterization of R\'enyi entropy provided by
Theorem~\ref{thm:main} (applied to IID sources) reveals many of the known properties of
R\'enyi entropy (see, e.g., \cite{csiszar1995generalized,shayevitz2010note}). 
For example, it shows that $H_{\rhotrans}(X)$ is nondecreasing in $\rho$
because~$\xi^\rho$ is nondecreasing in $\rho$ when $\xi\geq 1$. It also shows
that
\begin{equation}
\label{eq:renyi_bounds}
H(X) \leq H_{\rhotrans}(X) \leq \log \card{\supp(P)},
\end{equation}
where $H(X)$ denotes the Shannon entropy and $\supp(P)=\{x:P(x)>0\}$ denotes the support of $P$. 
Indeed, if $R< H(X)$, then, by the converse part of the classical source-coding
theorem~\cite[Theorem~3.1.1]{gallager1968information}
\begin{equation}
\lim_{n\to\infty}\Pr\bigl(\card{f_n^{-1}(f_n(X^n))} \geq 2\bigr) =1, 
\end{equation}
which implies that the $\rho$-th moment of $\card{f_n^{-1}(f_n(X^n))}$ cannot
tend to one as $n$ tends to infinity. 
And if $R\geq \log \card{\supp(P)}$, then every $n$-tuple of tasks that occurs
with positive probability can be given a distinct description so for every
$n$
\begin{equation}
\min_{f_n\colon \set{X}^n \to
\{1,\ldots,2^{nR}\}} \bigE{\card{f_n^{-1}(f_n(X^n))}^\rho}=1.
\end{equation}

The limit
\begin{equation}
\lim_{\rho\to\infty} H_{\rhotrans}(X) = \log \card{\supp(P)}
\end{equation}
follows from our operational characterization of R\'enyi entropy as follows. 
If $R< \log \card{\supp(P)}$, then, by the pigeonhole-principle, for any choice
of $f_n \colon \set{X}^n \to \{1,\ldots, 2^{nR}\}$ there must exist
some $x^n_0 \in \supp(P^n)$ for which 
\begin{equation}
\card{f_n^{-1}(f_n(x^n_0))}\geq 2^{n(\log\card{\supp(P)}-R)}. 
\end{equation}
Since $P^n(x^n_0) \geq p_{\text{min}}^n$, where
$p_{\text{min}}$ denotes the smallest nonzero probability of any source symbol,
we have
\begin{align}
\bigE{\card{f_n^{-1}(f_n(X^n))}^\rho}&\geq 
P^n(x^n_0)\card{f_n^{-1}(f_n(x^n_0))}^\rho\\
&\geq 2^{n\rho (\log\card{\supp(P)}-R+\rho^{-1} \log p_{\text{min}})}.
\end{align}
For all sufficiently large $\rho$ the RHS tends to infinity as $n\to\infty$, which proves
that $\lim_{\rho\to\infty} H_{\rhotrans}(X) \geq \log \card{\supp(P)}$; the
reverse inequality follows from~\eqref{eq:renyi_bounds}. 

As to the limit
when $\rho$ approaches zero, note that if $R> H(X)$, then the probability that
the cardinality of $f_n^{-1}(f_n(X^n))$ exceeds one can be driven to zero
exponentially fast in~$n$ \cite[Theorem~2.15]{csiszar2011information}, say as
$e^{-n\zeta}$ for some~$\zeta>0$ and sufficiently large $n$. 
And since $\card{f_n^{-1}(f_n(X^n))}$ is trivially upper-bounded
by $2^{n \log \card{\set{X}}}$, the $\rho$-th moment of
$\abs{f_n^{-1}(f_n(X^n))}$
will tend to one if $\rho<\zeta/\log\card{\set{X}}$. Thus, $\lim_{\rho\to 0}
H_{\rhotrans}(X) \leq  H(X)$ and, in view of~\eqref{eq:renyi_bounds},
\begin{equation}
\lim_{\rho\to 0} H_{\rhotrans}(X) =  H(X).
\end{equation}

The rest of this paper is organized as follows. 
In Section~\ref{sec:notation} we introduce some notation. 
In Section~\ref{sec:proof} we prove Theorem~\ref{thm:oneshot}. 
In Section~\ref{sec:mismatch} we consider a mismatched version of the direct
part of Theorem~\ref{thm:oneshot} (i.e., the upper bound), where $f$ is designed
based on the law $Q$ instead of $P$. We show that the penalty incurred by this
mismatch can be expressed in terms of a
divergence measure between $P$ and $Q$ that was proposed by
Sundaresan~\cite{sundaresan2007guessing}. 
In Section~\ref{sec:universal} we state and prove a universal version of 
the direct part of Theorem~\ref{thm:main} for IID sources.
In Section~\ref{sec:sideinfo} we generalize Theorems~\ref{thm:oneshot}
and~\ref{thm:main} to account for the presence of side-information, where the
key quantity is a conditional version of R\'enyi entropy. We also generalize the
result from Section~\ref{sec:universal}.
In Section~\ref{sec:lossy} we study a rate-distortion version of the problem
for IID sources, where the key quantity is ``R\'enyi's analog to the rate-distortion
function'' introduced by Arikan and Merhav~\cite{arikan1998guessing}. 
In Section~\ref{sec:costs} we study the problem of encoding IID tasks when 
different tasks may have different costs.

\section{Notation and Preliminaries}
\label{sec:notation}
We denote by $\naturals$ the set of positive integers. 
The cardinality of a finite set $\set{X}$ is denoted by $\card{\set{X}}$. 
We use the notation $x^n=(x_1,\ldots,x_n)$ for $n$-tuples.
If $P$ is a PMF on $\set{X}$, then $P^n$ denotes the product PMF on $\set{X}^n$ 
\begin{equation}
P^n(x^n) = \prod_{i=1}^n P(x_i),\quad x^n\in\set{X}^n.
\end{equation}
The support of $P$ is denoted by $\supp(P)$, so
\begin{equation}
\supp(P) = \bigl\{x\in\set{X}:P(x)>0\bigr\}.
\end{equation}
If $\set{A} \subseteq \set{X}$, then we write $P(\set{A})$ in lieu of $\sum_{x\in \set{A}}P(x)$. 
If  $W(\cdot|x)$ is a PMF on a finite set $\set{Y}$ for every $x\in\set{X}$
(i.e., a channel from $\set{X}$ to $\set{Y}$), then 
$P\circ W$ denotes the induced joint PMF on $\set{X}\times\set{Y}$
\begin{equation}
(P\circ W)(x,y) = P(x)W(y|x),\quad (x,y)\in\set{X}\times\set{Y},
\end{equation}
and $PW$ denotes the induced marginal PMF on $\set{Y}$
\begin{equation}
(PW)(y) = \sum_{x\in\set{X}} P(x)W(y|x),\quad y\in\set{Y}.
\end{equation}
The collection of all PMFs on $\set{X}$ is denoted by $\set{P}(\set{X})$. The
collection of all channels from $\set{X}$ to $\set{Y}$ is denoted by
$\set{P}(\set{Y}|\set{X})$. 

For information-theoretic quantities (entropy, relative entropy, mutual
information, etc.) we adopt the notation in~\cite{csiszar2011information}. 
We need basic results from the Method of Types as presented in~\cite[Chapter~2]{csiszar2011information}. 
The set of types of sequences in $\set{X}^n$ (i.e., the set of rational PMFs with denominator $n$) is denoted by
$\set{P}_n(\set{X})$. The set of all $x^n \in \set{X}^n$ of type~$Q\in\set{P}_n(\set{X})$ (i.e., the
type class of $Q$) is denoted by $T^{(n)}_Q$ or by $T_Q$ if $n$ is clear from the
context. The $V$-shell of a sequence $x^n \in \set{X}^n$ is denoted by $T_V(x^n)$. 

The ceiling of a real number $\xi$ (i.e., the smallest integer no smaller than
$\xi$) is denoted by $\lceil \xi\rceil$. We frequently use the inequality
\begin{equation}
\label{eq:ceiling}
\lceil\xi\rceil^\rho<1+2^\rho \xi^\rho,\quad \xi\geq0,
\end{equation}
which is easily checked by considering separately the cases~$0\leq\xi\leq 1$
and~$\xi>1$. As mentioned in the introduction, $\log(\cdot)$ denotes the base-2
logarithm, and $\log_\alpha(\cdot)$ denotes the base-$\alpha$ logarithm for
general $\alpha>1$. 

\section{Proof of Theorem~\ref{thm:oneshot}}
\label{sec:proof}
\subsection{The Lower Bound (Converse)}
\label{sec:lower_bound}
The proof of the lower bound~\eqref{eq:oneshot_lb} in Theorem~\ref{thm:oneshot} 
is inspired by the proof of~\cite[Theorem 1]{arikan1996inequality}.
It hinges on the following simple observation. 
\begin{proposition}
\label{prop:count_lists}
If $\set{L}_1,\ldots,\set{L}_M$ is a partition of a finite 
set~$\set{X}$ into $M$ nonempty subsets, i.e., 
\begin{equation}
\bigcup_{m=1}^M \set{L}_m = \set{X}\quad\text{and}\quad 
\text{($\set{L}_m\cap\set{L}_{m'} = \emptyset$ iff  $m'\neq
m$)}, 
\end{equation}
and $L(x)$ is the cardinality of the subset containing $x$, i.e.,
$L(x)=\card{\set{L}_m}$ if $x\in\set{L}_m$,  then
\begin{equation}
\label{eq:count_lists}
\sum_{x\in\set{X}} \frac{1}{L(x)} = M. 
\end{equation}
\end{proposition}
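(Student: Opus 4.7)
The plan is to exploit the partition structure directly: since $\set{L}_1,\ldots,\set{L}_M$ partition $\set{X}$, the sum over $\set{X}$ decomposes as a double sum
\[
\sum_{x\in\set{X}} \frac{1}{L(x)} = \sum_{m=1}^M \sum_{x\in\set{L}_m} \frac{1}{L(x)}.
\]
First I would note that inside the inner sum the function $L(\cdot)$ is constant: for every $x\in\set{L}_m$, the definition of $L$ gives $L(x)=\card{\set{L}_m}$. Hence each inner sum is a sum of $\card{\set{L}_m}$ equal terms, each equal to $1/\card{\set{L}_m}$.

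Next I would simply compute: the inner sum equals $\card{\set{L}_m}\cdot(1/\card{\set{L}_m})=1$ (which is well-defined because the partition parts are assumed nonempty, so no division by zero occurs). Summing over $m=1,\ldots,M$ then yields exactly $M$, as claimed in~\eqref{eq:count_lists}.

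There is essentially no obstacle here; the only thing to be careful about is the hypothesis that the subsets are nonempty, which ensures that $L(x)$ is a well-defined positive integer for every $x\in\set{X}$ and that each contribution is $1$ rather than an indeterminate $0/0$. The proof is a one-line application of the partition decomposition, and it is stated as a separate proposition only because it will be reused when bounding $\bigE{\card{f^{-1}(f(X))}^\rho}$ in the converse proof of Theorem~\ref{thm:oneshot}.
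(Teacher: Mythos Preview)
Your proposal is correct and is essentially identical to the paper's own proof: the paper also decomposes the sum as $\sum_{m=1}^M \sum_{x\in\set{L}_m} 1/L(x) = \sum_{m=1}^M \sum_{x\in\set{L}_m} 1/\card{\set{L}_m} = M$. There is nothing to add.
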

\begin{proof}
\begin{align}
\sum_{x\in\set{X}} \frac{1}{L(x)}&= \sum_{m=1}^M \sum_{x\in\set{L}_m}
\frac{1}{L(x)}\\
&=\sum_{m=1}^M \sum_{x\in\set{L}_m} \frac{1}{\card{\set{L}_m}}\\
&= M.
\end{align}
\end{proof}
To prove the lower bound in Theorem~\ref{thm:oneshot}, 
fix any $f\colon\set{X}\to\{1,\ldots,M\}$, 
and let $N$ denote the number of nonempty subsets in the partition
$f^{-1}(1),\ldots,f^{-1}(M)$. 
Note that for this partition the cardinality of the subset containing $x$
is 
\begin{equation}
\label{eq:induced_partition}
L(x) = \card{f^{-1}(f(x))},\quad x\in\set{X}.
\end{equation}
Recall H\"older's Inequality: If $a$ and $b$ are functions from $\set{X}$ into
the nonnegative reals, and $p$ and $q$ are real numbers larger than one
satisfying $1/p+1/q=1$, then 
\begin{equation}
\label{eq:hoelder}
\sum_{x\in\set{X}} a(x) b(x) \leq \biggl(\sum_{x\in\set{X}} a(x)^p\biggr)^{1/p}
\biggl(\sum_{x\in\set{X}} b(x)^q\biggr)^{1/q}.
\end{equation}
Rearranging~\eqref{eq:hoelder} gives
\begin{equation}
\label{eq:hoelder_alt}
\sum_{x\in\set{X}} a(x)^p \geq \biggl(\sum_{x\in\set{X}} b(x)^q\biggr)^{-p/q}
\biggl(\sum_{x\in\set{X}} a(x)b(x)\biggr)^{p}.
\end{equation}
Substituting $p=1+\rho$, $q=(1+\rho)/\rho$, $a(x) = P(x)^{\frac{1}{1+\rho}}
\card{f^{-1}(f(x))}^{\frac{\rho}{1+\rho}}$ and $b(x) =
\card{f^{-1}(f(x))}^{-\frac{\rho}{1+\rho}}$ in~\eqref{eq:hoelder_alt}, we obtain
\begin{align}
&\sum_{x\in\set{X}} P(x) \card{f^{-1}(f(x))}^{\rho}\notag\\
&\quad\geq \biggl(\sum_{x\in\set{X}} \frac{1}{\card{f^{-1}(f(x))}}\biggr)^{-\rho}
\biggl(\sum_{x\in\set{X}} P(x)^{\frac{1}{1+\rho}}
\biggr)^{1+\rho}\label{eq:lb_proof_501}\\
&\quad=2^{\rho(H_{\rhotrans}(X)-\log N)}\label{eq:440_43}\\
&\quad\geq  2^{\rho(H_{\rhotrans}(X)-\log M)},\label{eq:441_532}
\end{align}
where~\eqref{eq:440_43} follows from~\eqref{eq:renyi}, \eqref{eq:induced_partition}, and
Proposition~\ref{prop:count_lists}; and where~\eqref{eq:441_532} follows
because $N \leq M$.\qed

\subsection{The Upper Bound (Direct Part)}
\label{sec:upper_bound}
The key to the upper bound in Theorem~\ref{thm:oneshot} is the following 
reversed version of Proposition~\ref{prop:count_lists}; a proof is
provided in Appendix~\ref{sec:proposition}.
\begin{proposition}
\label{prop:sufficiency}
If $\set{X}$ is a finite set, $\lambda\colon \set{X} \to \naturals\cup\{+\infty\}$ and
\begin{equation}
\label{eq:sum_of_recip_equals_mu}
\sum_{x\in\set{X}} \frac{1}{\lambda(x)} = \mu
\end{equation}
(with the convention $1/\infty=0$), then there exists a partition of $\set{X}$ into at most
\begin{equation}
\label{eq:alpha}
\min_{\alpha >1} \lfloor\alpha \mu + \log_\alpha \card{\set{X}}+2\rfloor
\end{equation}
subsets such that
\begin{equation}
\label{eq:card_bound}
L(x) \leq \min\{\lambda(x),\card{\set{X}}\},\quad \text{for all $x\in\set{X}$,}
\end{equation}
where $L(x)$ is the cardinality of the subset containing $x$. 
\end{proposition}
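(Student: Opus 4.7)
The plan is to construct the partition by bucketing elements of $\set{X}$ according to the (logarithmic) size of $\lambda(x)$ and then subdividing each bucket into subsets whose size matches the $\lambda$-budget of its members. After fixing any $\alpha>1$, I set $K=\lceil\log_\alpha\card{\set{X}}\rceil$ and let
\begin{equation}
\set{X}_k=\{x\in\set{X}:\alpha^k\leq\lambda(x)<\alpha^{k+1}\},\quad k=0,\ldots,K-1,
\end{equation}
and $\set{X}_K=\{x\in\set{X}:\lambda(x)\geq\alpha^K\}$. The latter collects, in particular, every $x$ with $\lambda(x)=+\infty$. Since $\alpha^K\geq\card{\set{X}}\geq\card{\set{X}_K}$, the whole set $\set{X}_K$ may be taken as a single block of the partition: any $x\in\set{X}_K$ then satisfies $L(x)=\card{\set{X}_K}\leq\card{\set{X}}\leq\lambda(x)$. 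For $k<K$, I subdivide $\set{X}_k$ arbitrarily into blocks of size at most $\lceil\alpha^k\rceil$; because $\lambda$ is integer-valued, every $x\in\set{X}_k$ satisfies $\lambda(x)\geq\lceil\alpha^k\rceil$, so the resulting blocks verify $L(x)\leq\lceil\alpha^k\rceil\leq\lambda(x)$. Together with the trivial bound $L(x)\leq\card{\set{X}}$, this establishes~\eqref{eq:card_bound}.

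Next I count the total number of blocks. The number used for $\set{X}_k$, $k<K$, is at most $\bigl\lceil\card{\set{X}_k}/\lceil\alpha^k\rceil\bigr\rceil\leq\card{\set{X}_k}/\lceil\alpha^k\rceil+1$, and one more block is used for $\set{X}_K$, giving the bound
\begin{equation}
N\leq K+1+\sum_{k=0}^{K-1}\frac{\card{\set{X}_k}}{\lceil\alpha^k\rceil}.
\end{equation}
To convert the sum on the right into something involving $\mu=\sum_x 1/\lambda(x)$, I use the crucial observation that for $x\in\set{X}_k$ one has $\lambda(x)<\alpha^{k+1}$, so $1/\lceil\alpha^k\rceil\leq 1/\alpha^k<\alpha/\lambda(x)$. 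Summing over $k$ and $x\in\set{X}_k$ yields
\begin{equation}
\sum_{k=0}^{K-1}\frac{\card{\set{X}_k}}{\lceil\alpha^k\rceil}<\alpha\sum_{x\in\set{X}}\frac{1}{\lambda(x)}=\alpha\mu,
\end{equation}
and, combined with $K\leq\log_\alpha\card{\set{X}}+1$, I obtain $N<\alpha\mu+\log_\alpha\card{\set{X}}+2$. Since $N$ is an integer, $N\leq\lfloor\alpha\mu+\log_\alpha\card{\set{X}}+2\rfloor$; minimizing over $\alpha>1$ delivers~\eqref{eq:alpha}.

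The main obstacle, and the only delicate point, is choosing the right bucketization: it has to be fine enough that the subset sizes $\lceil\alpha^k\rceil$ never exceed $\lambda(x)$ for $x$ in the bucket, yet coarse enough that the number of buckets is only logarithmic in $\card{\set{X}}$. The geometric grid with ratio $\alpha$ balances these two demands and, together with the integrality trick $\lambda(x)\geq\lceil\alpha^k\rceil$, is what makes the overhead term $\log_\alpha\card{\set{X}}+2$ come out cleanly. Everything else — treating $\lambda(x)=+\infty$ via the top bucket $\set{X}_K$, and handling the ceiling through the bound $\lceil y\rceil\leq y+1$ — is routine.
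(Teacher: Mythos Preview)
Your proof is correct and takes a genuinely different route from the paper's. The paper sorts $\set{X}$ by $\lambda$, sets $\set{L}_0=\{x:\lambda(x)\geq\card{\set{X}}\}$, and then builds blocks greedily: $\set{L}_m$ consists of the next $\lambda(k_m)$ elements, where $k_m$ is the smallest remaining index. The cardinality constraint~\eqref{eq:card_bound} is then automatic, but the counting is subtler: the paper singles out the ``bad'' blocks in which $\lambda$ jumps by a factor exceeding $\alpha$, argues there are fewer than $\log_\alpha\card{\set{X}}$ of them, and for the remaining blocks invokes Proposition~\ref{prop:count_lists} together with $L(x)\geq\lambda(x)/\alpha$ to bound their number by $\alpha\mu$. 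Your geometric bucketization makes the same dichotomy explicit up front rather than emerging from a greedy process, and the count falls out immediately without needing Proposition~\ref{prop:count_lists}. The paper's construction is adaptive and yields tighter block sizes (each $L(x)$ equals the smallest $\lambda$ in its block rather than $\lceil\alpha^k\rceil$), which could matter for constants in practice; your argument is shorter and more transparent, and the integrality observation $\lambda(x)\geq\lceil\alpha^k\rceil$ is a nice touch that the greedy approach does not need. One cosmetic point: your strict inequality $\sum_k\card{\set{X}_k}/\lceil\alpha^k\rceil<\alpha\mu$ is vacuous when every $\set{X}_k$ with $k<K$ is empty, but in that case $N\leq 1$ and the conclusion holds trivially (alternatively, use that $\lceil y\rceil<y+1$ strictly in your block-count bound).
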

(Proposition~\ref{prop:count_lists} cannot be fully reversed in
the sense that~\eqref{eq:alpha} cannot be replaced with $\mu$. 
Indeed, consider $\set{X}=\{a,b,c,d\}$ with $\lambda(a)=1$,
$\lambda(b)=2$, and $\lambda(c)=\lambda(d)=4$. In this example, $\mu$ equals~2, 
but we need~3 subsets to satisfy the cardinality constraints.)

Since H\"older's Inequality~\eqref{eq:hoelder} holds with equality if, and only
if, $a(x)^p$ is
proportional to $b(x)^q$, it follows that the
lower bound in Theorem~\ref{thm:oneshot} holds with equality 
if, and only if, $\card{f^{-1}(f(x))}$ is proportional to $P(x)^{-1/(1+\rho)}$.
We derive~\eqref{eq:oneshot_ub} by constructing a partition that approximately satisfies this
relationship.
To this end, we use Proposition~\ref{prop:sufficiency} with $\alpha=2$ in~\eqref{eq:alpha}
and
\begin{equation}
\label{eq:lambda}
\lambda(x) = \begin{cases} \bigl\lceil \beta\,
P(x)^{-\frac{1}{1+\rho}}\bigr\rceil& \text{if $P(x)>0$,}\\+\infty& \text{if
$P(x)=0$,}\end{cases}
\end{equation}
where we choose $\beta$ just large enough to guarantee the existence of a
partition of $\set{X}$ into at most $M$ subsets  
satisfying~\eqref{eq:card_bound}. For $M>\log \card{\set{X}}+2$ this is accomplished by the choice
\begin{equation}
\label{eq:beta}
\beta= \frac{2\sum_{x\in\set{X}}
P(x)^{\frac{1}{1+\rho}}}{M-\log\card{\set{X}}-2}.
\end{equation}
Indeed, with this choice
\begin{align}
\mu&=\sum_{x\in\set{X}} \frac{1}{\lambda(x)} \\
&\leq \sum_{x\in\set{X}} \frac{P(x)^{\frac{1}{1+\rho}}}{\beta} \\
&=\frac{M-\log\card{\set{X}}-2}{2},
\end{align}
and hence
\begin{equation}
2\mu + \log \card{\set{X}} +2 \leq M.
\end{equation}
Let then the partition
$\set{L}_1,\ldots,\set{L}_N$ with $N \leq M$ be as promised by
Proposition~\ref{prop:sufficiency}. 
Construct an encoder $f\colon\set{X}\to\{1,\ldots,M\}$ by setting $f(x) = m$ if $x\in
\set{L}_m$. For this encoder,
\begin{align}
&\sum_{x\in\set{X}} P(x)\card{f^{-1}(f(x))}^\rho\notag\\
&\quad= \sum_{x:P(x)>0} P(x)
L(x)^\rho\label{eq:370}\\
&\quad\leq \sum_{x:P(x)>0} P(x) \lambda(x)^\rho\label{eq:371}\\
&\quad=\sum_{x:P(x)>0} P(x) \bigl\lceil \beta\,
P(x)^{-\frac{1}{1+\rho}}\bigr\rceil^\rho\\
&\quad< 1+ (2\beta)^{\rho} \sum_{x:P(x)>0}
P(x)^{\frac{1}{1+\rho}}\label{eq:oneshot_proof_strict_eq}\\
&\quad=1+ 2^{\rho(H_{\rhotrans}(X)-\log \widetilde{M})},\label{eq:372}
\end{align}
where~\eqref{eq:oneshot_proof_strict_eq} follows from~\eqref{eq:ceiling}, 
and where~$\widetilde{M}$ is as in Theorem~\ref{thm:oneshot}.\qed

\section{Mismatch} 
\label{sec:mismatch}
The key to the upper bound in Theorem~\ref{thm:oneshot} was to use
Proposition~\ref{prop:sufficiency} with $\lambda$ as in~\eqref{eq:lambda}--\eqref{eq:beta} 
to obtain a partition of $\set{X}$ for which the cardinality of the
subset containing $x$ is approximately proportional to
$P(x)^{-1/(1+\rho)}$. Evidently, this construction requires knowledge of the
distribution~$P$ of~$X$. (But see Section~\ref{sec:universal} for a universal
version of the direct part of Theorem~\ref{thm:main} for IID sources that does not require
knowledge of the source's distribution.) 

In this section, we study the penalty when~$P$ is replaced with~$Q$
in~\eqref{eq:lambda} and~\eqref{eq:beta}. 
Since it is then still true that
\begin{equation}
\mu \leq \frac{M-\log \card{\set{X}}-2}{2}, 
\end{equation}
Proposition~\ref{prop:sufficiency} guarantees the existence of a partition 
of~$\set{X}$ into at most~$M$ subsets satisfying~\eqref{eq:card_bound}. Constructing 
an encoder $f$ from this partition as in Section~\ref{sec:upper_bound} 
and following steps similar to~\eqref{eq:370}--\eqref{eq:372} yields
\begin{multline}
\label{eq:mismatch_bound}
\sum_{x\in\set{X}} P(x)\card{f^{-1}(f(x))}^\rho\\
<1+2^{\rho(H_{\rhotrans}(X) +
\Delta_{\rhotrans}(P||Q) - \log \widetilde{M})},
\end{multline}
where $\widetilde{M}$ is as in Theorem~\ref{thm:oneshot}, and where
\begin{multline}
\label{eq:divergence}
\Delta_\alpha(P||Q)\\ \triangleq \log \frac{\sum_{x\in\set{X}}
Q(x)^\alpha}{\bigl(\sum_{x\in\set{X}} P(x)^\alpha\bigr)^{\frac{1}{1-\alpha}}}\biggl(\sum_{x\in\set{X}}
\frac{P(x)}{Q(x)^{1-\alpha}}\biggr)^{\frac{\alpha}{1-\alpha}}.
\end{multline}
The parameter $\alpha$ can be any positive number not equal to one. We use the convention $0/0=0$ and
$a/0=+\infty$ if $a>0$. Thus, $\Delta_{\rhotrans}(P||Q)<\infty$ only if the support of $P$ is contained in the
support of $Q$. 

The penalty in the exponent on the RHS of~\eqref{eq:mismatch_bound} when compared to the upper bound in
Theorem~\ref{thm:oneshot} is thus given by $\Delta_{\rhotrans}(P||Q)$. 
To reinforce this, note further that 
\begin{equation}
\label{eq:product_div}
\Delta_\alpha(P^n||Q^n) = n \Delta_\alpha(P||Q).
\end{equation}
Consequently, if the source $\{X_i\}_{i=1}^\infty$ is IID $P$ and we construct
$f_n\colon\set{X}^n\to\{1,\ldots,2^{nR}\}$ based on~$Q^n$ instead of~$P^n$, 
we obtain the bound
\begin{multline}
\label{eq:mismatch_upper_bound}
\bigE{\card{f_n^{-1}(f_n(X^n))}^\rho}\\
<1+2^{n\rho(H_{\rhotrans}(X_1) +
\Delta_{\rhotrans}(P||Q) - R + \delta_n)},
\end{multline}
where $\delta_n\to 0$ as $n\to\infty$. 
The RHS of~\eqref{eq:mismatch_upper_bound} tends to one provided that
$R>H_{\rhotrans}(X_1)+\Delta_{\rhotrans}(P||Q)$. 
Thus, in the IID case $\Delta_{\rhotrans}(P||Q)$
is the rate penalty incurred by the mismatch between $P$ and $Q$.

The family of divergence measures $\Delta_{\alpha}(P||Q)$ was first identified by
Sundaresan~\cite{sundaresan2007guessing} who showed that it plays a similar
role in the Massey-Arikan guessing problem~\cite{massey1994guessing, arikan1996inequality}. 
We conclude this section with some properties of~$\Delta_{\alpha}(P||Q)$. 
Properties 1--3 (see below) were given in~\cite{sundaresan2007guessing}; we
repeat them here for completeness.
Note that R\'enyi's divergence (see, e.g.,~\cite{csiszar1995generalized})
\begin{equation}
D_\alpha(P||Q) = \frac{1}{\alpha-1} \log \sum_{x\in\set{X}} P(x)^\alpha
Q(x)^{1-\alpha},
\end{equation}
satisfies Properties 1 and 3 but none of the others in general.
\begin{proposition}
\label{prop:properties}
The functional $\Delta_\alpha(P||Q)$ has the following properties.
\begin{enumerate}
\item $\Delta_\alpha(P||Q) \geq 0$ with equality if, and only if, $P=Q$. 
\item $\Delta_\alpha(P||Q)=\infty$ if, and only if, $\supp(P)\not\subseteq
\supp(Q)$ or ($\alpha>1$ and $\supp(P)\cap \supp(Q) = \emptyset$.) 
\item $\lim_{\alpha\to 1} \Delta_\alpha(P||Q) = D(P||Q)$. 
\item $\lim_{\alpha\to 0}\Delta_\alpha(P||Q) = \log
\frac{\card{\supp(Q)}}{\card{\supp(P)}}$ if $\supp(P) \subseteq \supp(Q)$. 
\item $\lim_{\alpha\to \infty} \Delta_\alpha(P||Q) = \log
\frac{\max_{x\in\set{X}}
P(x)}{\frac{1}{\card{\set{Q}}}\sum_{x\in\set{Q}}P(x)}$, where 
$\set{Q}=\bigl\{x\in \set{X}: Q(x) = \max_{x'\in\set{X}} Q(x')\bigr\}$. 
\end{enumerate}
\end{proposition}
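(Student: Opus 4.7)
The plan is to treat the five properties in turn, with Hölder's inequality~\eqref{eq:hoelder} doing the heavy lifting for Property~1 and routine asymptotic expansions handling Properties~3--5. It is convenient throughout to rewrite the divergence as
\begin{equation*}
\Delta_\alpha(P||Q) = \log \sum_x Q(x)^\alpha + \frac{\alpha}{1-\alpha} \log \sum_x P(x) Q(x)^{\alpha-1} - \frac{1}{1-\alpha} \log \sum_x P(x)^\alpha.
\end{equation*}
For Property~1 I proceed by cases. For $\alpha>1$, Hölder with $p=\alpha$, $q=\alpha/(\alpha-1)$, $a_x=P(x)$, $b_x=Q(x)^{\alpha-1}$ yields $\bigl(\sum_x P(x)Q(x)^{\alpha-1}\bigr)^\alpha \leq \bigl(\sum_x P(x)^\alpha\bigr)\bigl(\sum_x Q(x)^\alpha\bigr)^{\alpha-1}$. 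For $\alpha\in(0,1)$, Hölder with $p=1/\alpha$, $q=1/(1-\alpha)$, $a_x=(P(x)Q(x)^{\alpha-1})^\alpha$, $b_x=Q(x)^{\alpha(1-\alpha)}$ (chosen so that $a_xb_x=P(x)^\alpha$) yields $\sum_x P(x)^\alpha \leq \bigl(\sum_x P(x)Q(x)^{\alpha-1}\bigr)^\alpha \bigl(\sum_x Q(x)^\alpha\bigr)^{1-\alpha}$. In each case, taking logarithms, dividing by $|1-\alpha|$ with the appropriate sign, and rearranging gives $\Delta_\alpha(P||Q)\geq 0$. Equality in Hölder holds iff $a_x^p\propto b_x^q$, which forces $P(x)\propto Q(x)$ and hence $P=Q$.

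For Property~2 I use the conventions $0/0=0$ and $a/0=+\infty$ for $a>0$ and split into cases. The sums $\sum_x Q(x)^\alpha$ and $\sum_x P(x)^\alpha$ are always finite and strictly positive, so the only source of divergence is the third factor. When $\alpha<1$, any $x$ with $P(x)>0$ and $Q(x)=0$ makes $P(x)/Q(x)^{1-\alpha}=\infty$, and the positive exponent $\alpha/(1-\alpha)$ preserves that. When $\alpha>1$, the sum $\sum_x P(x)Q(x)^{\alpha-1}$ is always finite but vanishes precisely when $\supp(P)\cap\supp(Q)=\emptyset$, at which point the negative exponent $\alpha/(1-\alpha)$ produces $+\infty$. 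When none of these conditions holds, all three sums are positive and finite, so $\Delta_\alpha$ is finite.

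For Properties~3--5 I pass to the relevant asymptotics. For Property~3, expanding $P(x)^\alpha = P(x)\bigl(1+(\alpha-1)\ln P(x)+O((\alpha-1)^2)\bigr)$ and similarly for $Q(x)^{\alpha-1}$ gives $-\frac{1}{1-\alpha}\log\sum_x P(x)^\alpha\to -H(P)$, $\frac{\alpha}{1-\alpha}\log\sum_x P(x)Q(x)^{\alpha-1}\to -\sum_x P(x)\log Q(x)$, and $\log\sum_x Q(x)^\alpha\to 0$; summing yields $D(P||Q)$. For Property~4, $P(x)^\alpha\to\mathbf{1}\{P(x)>0\}$ as $\alpha\to 0$, so $\log\sum_x P(x)^\alpha\to\log\card{\supp(P)}$ (and likewise for $Q$), while the third term vanishes since its coefficient $\alpha/(1-\alpha)$ does and the sum itself is bounded when $\supp(P)\subseteq\supp(Q)$. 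For Property~5, with $p_{\max}=\max_x P(x)$, $q_{\max}=\max_x Q(x)$, and $k_P$ the number of $P$-maximizers, the leading-order behavior is $\sum_x P(x)^\alpha = p_{\max}^\alpha(k_P+o(1))$, $\sum_x Q(x)^\alpha = q_{\max}^\alpha(\card{\set{Q}}+o(1))$, and $\sum_x P(x)Q(x)^{\alpha-1} = q_{\max}^{\alpha-1}(P(\set{Q})+o(1))$; the $\alpha\log q_{\max}$ contributions from the first and third terms in $\Delta_\alpha$ cancel, leaving the asserted $\log\bigl(\card{\set{Q}}\,p_{\max}/P(\set{Q})\bigr)$. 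The main technical subtlety is Property~1: although Hölder is the natural tool, the sign of $1-\alpha$ flips between the two ranges, requiring two distinct conjugate-exponent decompositions and careful tracking of the resulting inequality direction; the remaining properties reduce to standard asymptotic bookkeeping.
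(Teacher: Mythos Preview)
Your proof is correct and follows essentially the same approach as the paper: the paper handles Property~2 ``by inspection'' and Properties~3--5 ``by simple calculus'' (you spell out these routine steps), and for Property~1 it applies H\"older's inequality in exactly the two case-splits you use, with the same conjugate exponents $p=1/\alpha$, $q=1/(1-\alpha)$ for $\alpha\in(0,1)$ and $p=\alpha$, $q=\alpha/(\alpha-1)$ for $\alpha>1$. The only cosmetic difference is that the paper restricts the sums to $\supp(P)$ before applying H\"older in the $\alpha<1$ case and then enlarges back to $\set{X}$, whereas you absorb that step into the equality analysis; both arrive at the same conclusion.
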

\begin{proof}
Property 2 follows by inspection of~\eqref{eq:divergence}. Properties 3--5 follow by
simple calculus. As to Property 1, 
consider first the case where $0<\alpha<1$. In view of Property 2, we may assume
that $\supp(P) \subseteq \supp(Q)$. H\"older's Inequality~\eqref{eq:hoelder} with
$p=1/\alpha$ and $q=1/(1-\alpha)$ gives
\begin{align}
&\sum_{x\in\set{X}} P(x)^\alpha\notag\\
&\quad=\sum_{x\in\supp(P)}
\frac{P(x)^\alpha}{Q(x)^{\alpha(1-\alpha)}} Q(x)^{\alpha(1-\alpha)}\\
&\quad\leq \biggl(\sum_{x\in\supp(P)}
 \frac{P(x)}{Q(x)^{1-\alpha}}\biggr)^\alpha \biggl(\sum_{x\in\supp(P)} Q(x)^\alpha
\biggr)^{1-\alpha}\\
&\quad\leq \biggl(\sum_{x\in\set{X}}
 \frac{P(x)}{Q(x)^{1-\alpha}}\biggr)^\alpha \biggl(\sum_{x\in\set{X}} Q(x)^\alpha
\biggr)^{1-\alpha}.\label{eq:apply_holder_651}
\end{align}
Dividing by $\sum_{x}P(x)^\alpha$ and 
taking $(1-\alpha$)-th roots shows that $\Delta_\alpha(P||Q)\geq 0$. 
The condition for equality in H\"older's Inequality implies that 
equality holds if, and only if, $P=Q$. Consider next the 
case where~$\alpha>1$. We may assume $\supp(P)\cap \supp(Q) \neq \emptyset$
(Property~2). H\"older's Inequality with $p=\alpha$ and
$q=\alpha/(\alpha-1)$ gives
\begin{align}
\sum_{x\in\set{X}}\frac{P(x)}{Q(x)^{1-\alpha}}&=\sum_{x\in\set{X}}P(x)
Q(x)^{\alpha-1}\\
&\leq \biggl(\sum_{x\in\set{X}} P(x)^\alpha\biggr)^{\frac{1}{\alpha}}
\biggl(\sum_{x\in\set{X}}
Q(x)^{\alpha}\biggr)^{\frac{\alpha-1}{\alpha}}.\label{eq:apply_holder_664}
\end{align}
Dividing by $\sum_{x} P(x)/Q(x)^{1-\alpha}$ and raising to the
power of $\alpha/(\alpha-1)$ shows that $\Delta_{\alpha}(P||Q)\geq 0$. 
Equality holds if, and only if, $P=Q$. 
\end{proof}

\section{Universal Encoders for IID Sources}
\label{sec:universal}
In Section~\ref{sec:introduction} the direct part of Theorem~\ref{thm:main} is
proved using the upper bound in Theorem~\ref{thm:oneshot}.
It is pointed out in Section~\ref{sec:mismatch} that the
construction of the encoder in the proof of this upper bound
requires knowledge of the distribution of $X$. 
As the next result shows, however, for IID sources we do not need to know the
distribution of the source to construct good encoders.
\begin{theorem}
\label{thm:universal}
Let $\set{X}$ be a finite set, and let $\rho>0$. For every rate $R>0$ 
there exist encoders $f_n\colon \set{X} \to \{1,\ldots,
2^{nR}\}$ such that for every IID source~$\{X_i\}_{i=1}^\infty$ with alphabet $\set{X}$, 
\begin{equation}
\bigE{\card{f^{-1}_n(f_n(X^n))}^\rho} <
1+2^{-n\rho(R-H_{\rhotrans}(X_1)-\delta_n)},
\end{equation}
where
\begin{equation}
\delta_n = \frac{1+(1+\rho^{-1})\card{\set{X}}\log(n+1)}{n}.
\end{equation}
In particular, 
\begin{equation}
\lim_{n\to\infty}\bigE{\card{f^{-1}_n(f_n(X^n))}^\rho}=1,
\end{equation}
whenever $H_{\rhotrans}(X_1) < R$. 
\end{theorem}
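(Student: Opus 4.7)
My plan is to build one universal encoder using the Method of Types, exploiting the fact that under any IID law the probability of a sequence depends only on its empirical type. Set $K_n=\card{\set{P}_n(\set{X})}\leq (n+1)^{\card{\set{X}}}$. For each type $Q\in\set{P}_n(\set{X})$ I would allocate $M_Q=\lfloor 2^{nR}/K_n\rfloor$ of the $2^{nR}$ descriptions to the type class $T_Q$ and partition $T_Q$ into $\min(\card{T_Q},M_Q)$ subsets of nearly equal size, so that every subset has at most $\lceil \card{T_Q}/M_Q\rceil$ elements. Since this rule depends only on $n$, $R$, and $\set{X}$, the same $f_n$ works for every IID source on $\set{X}$.

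For $x^n\in T_Q$ the cardinality $L(x^n)$ of the subset containing it is at most $\lceil \card{T_Q}/M_Q\rceil$, and \eqref{eq:ceiling} gives
\begin{equation*}
\bigE{L(X^n)^\rho}\leq 1+2^\rho\sum_{Q\in\set{P}_n(\set{X})}P^n(T_Q)\Bigl(\card{T_Q}/M_Q\Bigr)^\rho.
\end{equation*}
Under an IID law $P$ every sequence in $T_Q$ has probability $2^{-n(H(Q)+D(Q\|P))}$ and $\card{T_Q}\leq 2^{nH(Q)}$, so $P^n(T_Q)\card{T_Q}^\rho\leq 2^{n(\rho H(Q)-D(Q\|P))}$. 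For $n$ large enough that $2^{nR}\geq 2K_n$ one has $M_Q\geq 2^{nR}/(2K_n)$, and there are at most $K_n$ terms, so the sum is bounded by
\begin{equation*}
2^\rho\bigl(2K_n/2^{nR}\bigr)^\rho K_n \max_{Q\in\set{P}(\set{X})}2^{n(\rho H(Q)-D(Q\|P))}.
\end{equation*}

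The closing step is the variational identity
\begin{equation*}
\sup_{Q\in\set{P}(\set{X})}\bigl\{\rho H(Q)-D(Q\|P)\bigr\}=\rho H_{\rhotrans}(P),
\end{equation*}
which I would verify in a line by a Lagrange-multiplier computation showing the maximizer is $Q(x)\propto P(x)^{1/(1+\rho)}$. Combining the pieces produces a bound of the form $1+2^{-n\rho(R-H_{\rhotrans}(X_1)-\delta_n)}$, and the $\lim=1$ conclusion then follows whenever $R>H_{\rhotrans}(X_1)$. The only real obstacle is cosmetic: my crude rounding leaves a factor of $2$ more than the announced constant in $\delta_n$. Squeezing that out amounts to a slightly sharper bookkeeping step, for instance by handling separately the types with $\card{T_Q}\leq M_Q$ (which contribute $L(x^n)=1$ and hence no rounding loss at all), or by noting that $\lfloor 2^{nR}/K_n\rfloor/(2^{nR}/K_n)\to 1$ and absorbing the difference into the $\log(n+1)$ term. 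These are clerical adjustments, not conceptual ones.
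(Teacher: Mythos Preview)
Your proposal is correct and follows essentially the same route as the paper's proof: partition $\set{X}^n$ by type, allocate roughly $2^{nR}/\card{\set{P}_n(\set{X})}$ labels to each type class, bound the resulting cell sizes via $\card{T_Q}\le 2^{nH(Q)}$ and $P^n(T_Q)\le 2^{-nD(Q\|P)}$, apply~\eqref{eq:ceiling}, and close with the variational identity~\eqref{eq:renyi_variation}. The only discrepancy is the extra factor of~$2$ in your $\delta_n$, which (as you note) is pure bookkeeping; the paper avoids it simply by taking the per-type budget to be $2^{n(R-\delta'_n)}$ with $\delta'_n=n^{-1}\card{\set{X}}\log(n+1)$ rather than $\lfloor 2^{nR}/K_n\rfloor$, so no floor correction is needed.
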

\begin{proof}
We first partition $\set{X}^n$ into the different type classes $T_Q$, of which there
are less than $(n+1)^{\card{\set{X}}}$. We then partition each $T_Q$ into
$2^{n(R-\delta'_n)}$ subsets of cardinality at most $\lceil \card{T_Q}
2^{-n(R-\delta'_n)}\rceil$ where $\delta'_n = n^{-1}
\card{\set{X}} \log(n+1)$. Since $\card{T_Q} \leq 2^{nH(Q)}$, each~$x^n\in T_Q$
thus ends up in a subset of cardinality at most
\begin{equation}
\bigl\lceil 2^{n(H(Q)-R+\delta'_n)}\bigr\rceil.
\end{equation}
Note that the total number of subsets in the partition does not exceed $2^{nR}$. 
We construct $f_n\colon \set{X} \to \{1,\ldots, 2^{nR}\}$ by
enumerating the subsets in the partition with the numbers in $\{1,\ldots, 2^{nR}\}$ and by mapping 
to~$m\in\{1,\ldots,2^{nR}\}$ the $x^n$'s that comprise the $m$-th subset. 
Suppose now that $\{X_i\}_{i=1}^\infty$ is IID $P$ and observe that
\begin{align}
&\bigE{\card{f_n^{-1}(f_n(X^n))}^\rho}\notag\\
&\quad=\sum_{x^n \in \set{X}^n} P^n(x^n)
\card{f_n^{-1}(f_n(X^n))}^\rho\\
&\quad\leq  \sum_{Q\in \set{P}_n(\set{X})} \sum_{x^n \in T_Q} P^n(x^n) \bigl\lceil
2^{n(H(Q)-R+\delta'_n)}\bigr\rceil^\rho\label{eq:iid_ub_card}\\
&\quad< 1+ 2^{\rho}\sum_{Q\in \set{P}_n(\set{X})}2^{n\rho(H(Q)-R+\delta'_n)}
\sum_{x^n \in T_Q} P^n(x^n)\label{eq:iid_ub_ceil}\\
&\quad\leq 1+ 2^\rho \sum_{Q\in \set{P}_n(\set{X})} 
2^{-n\rho(R-H(Q)+\rho^{-1} D(Q||P)-\delta'_n)}\label{eq:iid_ub_type}\\
&\quad\leq 1+ 2^{-n\rho(R-H_{\rhotrans}(X_1)-\delta_n)}.\label{eq:iid_ub_poly_types}
\end{align}
Here~\eqref{eq:iid_ub_card} follows from the construction of $f_n$;
\eqref{eq:iid_ub_ceil} follows from~\eqref{eq:ceiling};
\eqref{eq:iid_ub_type} follows because the probability of the source
emitting a sequence of type $Q$ is at most~$2^{-nD(Q||P)}$;
and~\eqref{eq:iid_ub_poly_types} follows from the identity (see~\cite{arikan1996inequality})
\begin{equation}
\label{eq:renyi_variation}
H_{\rhotrans}(X_1) = \max_{Q\in\set{P}(\set{X})} H(Q) - \rho^{-1}
D(Q||P),
\end{equation} 
and the fact that $\card{\set{P}_n(\set{X})} < (n+1)^{\card{\set{X}}}$. 
\end{proof}

\section{Tasks with Side-Information}
\label{sec:sideinfo}
In this section we generalize Theorems~\ref{thm:oneshot}, \ref{thm:main},
and~\ref{thm:universal} to
account for side-information: A task $X$ and side-information
$Y$ are drawn according to a joint PMF $P_{X,Y}$ on $\set{X}\times\set{Y}$, where 
both~$\set{X}$ and~$\set{Y}$ are finite, and where the side-information is
available to both the task describer (encoder) and the tasks performer. 
The encoder is now of the form
\begin{equation}
f\colon \set{X}\times\set{Y}\to \{1,\ldots,M\}.
\end{equation}
If the realization of $(X,Y)$ is $(x,y)$ and $f(x,y)=m$, then all the tasks in the set
\begin{equation}
f^{-1}(m,y) \eqdef \{x\in \set{X}: f(x,y)=m\}
\end{equation}
are performed. 
As in Section~\ref{sec:introduction}, we seek to minimize for a given $M$ the $\rho$-th moment of the number of
performed tasks
\begin{multline}
\bigE{\card{f^{-1}(f(X,Y),Y)}^\rho}\\
= \sum_{x\in\set{X}}\sum_{y\in\set{Y}}
P_{X,Y}(x,y) \card{f^{-1}(f(x,y),y)}^{\rho}.
\end{multline}
The key quantity here is a conditional version of R\'enyi entropy (proposed by
Arimoto~\cite{arimoto1977information}):
\begin{equation}
H_{\rhotrans}(X|Y) = \frac{1}{\rho} \log \sum_{y\in\set{Y}}
\biggl(\sum_{x\in\set{X}} P_{X,Y}(x,y)^{\frac{1}{1+\rho}} \biggr)^{1+\rho}.
\end{equation}
Theorem~\ref{thm:oneshot} can be generalized as follows.
\begin{theorem}
\label{thm:oneshot_sideinfo}
Let $(X,Y)$ be a pair of chance variables taking value in the finite set
$\set{X}\times\set{Y}$, and let $\rho>0$. 
\begin{enumerate}
\item For all positive integers $M$ and every $f\colon\set{X}\times\set{Y}\to \{1,\ldots,M\}$, 
\begin{equation}
\label{eq:oneshot_sideinfo_lb}
\bigE{\card{f^{-1}(f(X,Y),Y)}^\rho}\geq  2^{\rho(H_{\rhotrans}(X|Y)-\log M)}. 
\end{equation}
\item For all integers $M > \log\card{\set{X}}+2$ there exists $f\colon
\set{X}\times\set{Y}\to\{1,\ldots,M\}$ such that
\begin{equation}
\label{eq:oneshot_sideinfo_ub}
\bigE{\card{f^{-1}(f(X,Y),Y)}^\rho}
<1+2^{\rho(H_{\rhotrans}(X|Y)-\log \widetilde{M})},
\end{equation}
where $\widetilde{M}=(M-\log\card{\set{X}}-2)/4$.
\end{enumerate}
\end{theorem}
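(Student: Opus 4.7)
The plan is to reduce Theorem~\ref{thm:oneshot_sideinfo} to Theorem~\ref{thm:oneshot} by conditioning on the side-information. For every $y\in\set{Y}$ an encoder $f\colon\set{X}\times\set{Y}\to\{1,\ldots,M\}$ induces an encoder $f_y\colon\set{X}\to\{1,\ldots,M\}$ defined by $f_y(x)=f(x,y)$, and one has $f^{-1}(f(x,y),y)=f_y^{-1}(f_y(x))$. Conditioning gives
\begin{equation}
\bigE{\card{f^{-1}(f(X,Y),Y)}^\rho}=\sum_{y\in\set{Y}}P_Y(y)\sum_{x\in\set{X}}P_{X|Y}(x|y)\card{f_y^{-1}(f_y(x))}^\rho,
\end{equation}
so the inner sum is exactly the quantity bounded in Theorem~\ref{thm:oneshot} when applied to the conditional law $P_{X|Y=y}$ on $\set{X}$.

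The key algebraic identity I would first establish is
\begin{equation}
\label{eq:plan_identity}
2^{\rho H_{\rhotrans}(X|Y)}=\sum_{y\in\set{Y}}P_Y(y)\,2^{\rho H_{\rhotrans}(X|Y=y)},
\end{equation}
where $H_{\rhotrans}(X|Y=y)=\tfrac{1+\rho}{\rho}\log\sum_x P_{X|Y}(x|y)^{1/(1+\rho)}$ is the unconditional R\'enyi entropy of the conditional law. This is immediate from the definitions once one factors $P_{X,Y}(x,y)^{1/(1+\rho)}=P_Y(y)^{1/(1+\rho)}P_{X|Y}(x|y)^{1/(1+\rho)}$ inside Arimoto's formula.

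For the lower bound~\eqref{eq:oneshot_sideinfo_lb} I would apply the converse part of Theorem~\ref{thm:oneshot} to $f_y$ under $P_{X|Y=y}$ (for each $y$ with $P_Y(y)>0$) to get
\begin{equation}
\sum_{x}P_{X|Y}(x|y)\card{f_y^{-1}(f_y(x))}^\rho\geq 2^{\rho(H_{\rhotrans}(X|Y=y)-\log M)},
\end{equation}
then average against $P_Y$ and invoke~\eqref{eq:plan_identity} to conclude.

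For the upper bound~\eqref{eq:oneshot_sideinfo_ub} I would invoke the direct part of Theorem~\ref{thm:oneshot}, which (under the hypothesis $M>\log\card{\set{X}}+2$ that carries over verbatim) yields for each $y$ an encoder $f_y\colon\set{X}\to\{1,\ldots,M\}$ satisfying
\begin{equation}
\sum_{x}P_{X|Y}(x|y)\card{f_y^{-1}(f_y(x))}^\rho<1+2^{\rho(H_{\rhotrans}(X|Y=y)-\log\widetilde{M})},
\end{equation}
with $\widetilde{M}=(M-\log\card{\set{X}}-2)/4$. Setting $f(x,y)=f_y(x)$ (choosing $f_y$ arbitrarily on the zero-probability $y$'s), averaging over $y$, and applying~\eqref{eq:plan_identity} a second time gives exactly~\eqref{eq:oneshot_sideinfo_ub}.

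There is no substantive obstacle; the only thing that has to be checked is~\eqref{eq:plan_identity}, which is what makes Arimoto's definition the ``right'' conditional R\'enyi entropy for this problem. A minor bookkeeping point is that the constraint $M>\log\card{\set{X}}+2$ in the direct part depends only on $\card{\set{X}}$ (not on $y$), so the same $\widetilde{M}$ works uniformly across all $y\in\set{Y}$, which is what makes the averaging clean.
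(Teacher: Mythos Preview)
Your proposal is correct and is essentially the paper's own argument: the paper too conditions on $y$, applies the unconditional bounds of Theorem~\ref{thm:oneshot} to the law $P_{X|Y=y}$ (phrasing the per-$y$ inequalities in the equivalent form $\bigl(\sum_x P_{X|Y}(x|y)^{1/(1+\rho)}\bigr)^{1+\rho}$ rather than $2^{\rho H_{\rhotrans}(X|Y=y)}$), sets $f(x,y)=f_y(x)$ in the direct part, and averages over $P_Y$. Your identity~\eqref{eq:plan_identity} is exactly what the paper uses implicitly in that averaging step.
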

As a corollary we obtain a generalization of Theorem~\ref{thm:main}. 
\begin{theorem}
\label{thm:main_si}
Let $\{(X_i,Y_i)\}_{i=1}^\infty$ be any source with finite alphabet
$\set{X}\times\set{Y}$, and let $\rho>0$.
\begin{enumerate}
\item If $R> \limsup_{n\to\infty} H_{\rhotrans}(X^n|Y^n)/n$, then there exist
$f_n \colon \set{X}^n\times\set{Y}^n \to \{1,\ldots, 2^{nR}\}$ such that
\begin{equation}
\lim_{n\to\infty} \bigE{\card{f^{-1}_n(f_n(X^n,Y^n),Y^n)}^\rho} = 1.
\end{equation}
\item If $R< \liminf_{n\to\infty} H_{\rhotrans}(X^n|Y^n)/n$, then for any choice
of $f_n \colon \set{X}^n\times\set{Y}^n \to \{1,\ldots, 2^{nR}\}$
\begin{equation}
\lim_{n\to\infty} \bigE{\card{f^{-1}_n(f_n(X^n,Y^n),Y^n)}^\rho} = \infty.
\end{equation}
\end{enumerate}
\end{theorem}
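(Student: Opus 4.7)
The plan is to deduce Theorem~\ref{thm:main_si} from Theorem~\ref{thm:oneshot_sideinfo} exactly as Theorem~\ref{thm:main} was deduced from Theorem~\ref{thm:oneshot}. That is, I would apply the one-shot bounds in Theorem~\ref{thm:oneshot_sideinfo} to the pair of block random variables $(X^n,Y^n)$ taking values in the finite set $\set{X}^n\times\set{Y}^n$, with $M=\lfloor 2^{nR}\rfloor$.

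For all $n$ large enough that $2^{nR}>n\log\card{\set{X}}+2$, Theorem~\ref{thm:oneshot_sideinfo} yields
\begin{multline}
2^{n\rho\bigl(\frac{H_{\rhotrans}(X^n|Y^n)}{n}-R\bigr)} \le
\min_{f_n} \bigE{\card{f_n^{-1}(f_n(X^n,Y^n),Y^n)}^\rho} \\
< 1+2^{n\rho\bigl(\frac{H_{\rhotrans}(X^n|Y^n)}{n}-R+\delta_n\bigr)},
\end{multline}
where the minimum is over $f_n\colon\set{X}^n\times\set{Y}^n\to\{1,\ldots,2^{nR}\}$, and where $\delta_n\to 0$ as $n\to\infty$. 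The vanishing of $\delta_n$ follows from the definition $\widetilde{M}=(2^{nR}-n\log\card{\set{X}}-2)/4$ in Theorem~\ref{thm:oneshot_sideinfo}: since $2^{nR}$ dominates the subtracted term exponentially, $\log\widetilde{M}=nR-2-o(1)$, so $H_{\rhotrans}(X^n|Y^n)-\log\widetilde{M}=n\bigl(H_{\rhotrans}(X^n|Y^n)/n-R+\delta_n\bigr)$ with $\delta_n=(2+o(1))/n$.

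For Part~1, the hypothesis $R>\limsup_n H_{\rhotrans}(X^n|Y^n)/n$ implies that there exist $\epsilon>0$ and $n_0$ such that for all $n\ge n_0$ the exponent $H_{\rhotrans}(X^n|Y^n)/n-R+\delta_n$ is less than $-\epsilon$. Choosing $f_n$ to achieve (or approach) the minimum in the sandwich then forces the RHS of the upper bound to converge to one as $n\to\infty$. For Part~2, the hypothesis $R<\liminf_n H_{\rhotrans}(X^n|Y^n)/n$ symmetrically provides an $\epsilon'>0$ for which, for all sufficiently large $n$, the exponent of the lower bound is at least $\epsilon'$; hence the lower bound, which holds for every choice of $f_n$, diverges to infinity.

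There is no real obstacle here: the reduction is mechanical once Theorem~\ref{thm:oneshot_sideinfo} is available, and the only bookkeeping is tracking the logarithmic correction term to verify that $\delta_n\to 0$. The substantive work is entirely contained in Theorem~\ref{thm:oneshot_sideinfo}, whose proof would in turn adapt the arguments of Section~\ref{sec:proof} by summing over $y$ and using the Arimoto form of conditional R\'enyi entropy.
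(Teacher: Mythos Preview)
Your proposal is correct and matches the paper's approach exactly: the paper presents Theorem~\ref{thm:main_si} as an immediate corollary of Theorem~\ref{thm:oneshot_sideinfo}, obtained by applying the one-shot bounds to $(X^n,Y^n)$ with $M=2^{nR}$, just as Theorem~\ref{thm:main} was derived from Theorem~\ref{thm:oneshot}. Your bookkeeping on $\widetilde{M}$ and $\delta_n$ is accurate and fills in the details the paper leaves implicit.
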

To prove~\eqref{eq:oneshot_sideinfo_lb} fix $M$ and $f\colon
\set{X}\times\set{Y}\to \{1,\ldots, M\}$.
Note that for every $y\in \set{Y}$ the sets
$f^{-1}(1,y),\ldots,f^{-1}(M,y)$ form a partition of $\set{X}$, and the
cardinality of the subset containing $x$ is~$\card{f^{-1}(f(x,y),y)}$. 
Following steps similar to~\eqref{eq:lb_proof_501}--\eqref{eq:441_532}, we obtain
\begin{multline}
\sum_{x\in \set{X}} P_{X|Y}(x|y) \card{f^{-1}(f(x,y),y)}^\rho\\
\geq
2^{-\rho \log M} \biggl(\sum_{x\in\set{X}}
P_{X|Y}(x|y)^{\frac{1}{1+\rho}}\biggr)^{1+\rho},\quad y\in\set{Y}.
\end{multline}
Multiplying both sides by $P_Y(y)$ and summing over all $y\in \set{Y}$
establishes~\eqref{eq:oneshot_sideinfo_lb}.

To prove~\eqref{eq:oneshot_sideinfo_ub} fix some $y\in\set{Y}$
and replace $P(x)$ with $P_{X|Y}(x|y)$ everywhere in the proof of the upper
bound in Theorem~\ref{thm:oneshot} (see Section~\ref{sec:upper_bound}) to obtain 
an encoder $f_y \colon \set{X} \to \{1,\ldots, M\}$ satisfying
\begin{multline}
\label{eq:ub_817}
\sum_{x\in\set{X}} P_{X|Y}(x|y)
\card{f_y^{-1}(f_y(x))}^{\rho}\\
< 1+ 2^{-\rho \log \widetilde{M}}
\biggl(\sum_{x\in\set{X}} P_{X|Y}(x|y)^{\frac{1}{1+\rho}}\biggr)^{1+\rho}.
\end{multline}
Setting $f(x,y)=f_y(x)$, multiplying both sides of~\eqref{eq:ub_817} 
by~$P_Y(y)$, and summing over all $y\in\set{Y}$
establishes~\eqref{eq:oneshot_sideinfo_ub}.\qed

One may also generalize Theorem~\ref{thm:universal}:
\begin{theorem}
\label{thm:universal_si}
Let $\set{X}$ and $\set{Y}$ be finite sets, and let $\rho>0$. 
For every rate $R>0$ there exist encoders $f_n\colon \set{X}\times\set{Y} \to \{1,\ldots,
2^{nR}\}$ such that for every IID source~$\{(X_i,Y_i)\}_{i=1}^\infty$ with alphabet $\set{X}\times\set{Y}$, 
\begin{multline}
\label{eq:universal_si_884}
\bigE{\card{f^{-1}_n(f_n(X^n,Y^n),Y^n)}^\rho}\\
< 1+2^{-n\rho(R-H_{\rhotrans}(X_1|Y_1)-\delta_n)},
\end{multline}
where
\begin{equation}
\delta_n = \frac{1+(1+\rho^{-1})\card{\set{X}}\card{\set{Y}}\log(n+1)+\rho^{-1}
\card{\set{X}}\log(n+1)}{n}.
\end{equation}
In particular, 
\begin{equation}
\lim_{n\to\infty}\bigE{\card{f^{-1}_n(f_n(X^n,Y^n),Y^n)}^\rho}=1,
\end{equation}
whenever $H_{\rhotrans}(X_1|Y_1) < R$. 
\end{theorem}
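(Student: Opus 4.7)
The plan is to mirror the proof of Theorem~\ref{thm:universal} while replacing unconditional types with conditional types parametrized by $y^n$, and to invoke a variational formula for Arimoto's conditional R\'enyi entropy analogous to~\eqref{eq:renyi_variation}. The central observation is that the encoder is free to depend on $y^n$ (which is available at both terminals), so for each $y^n$ we may build a different partition of $\set{X}^n$ based on conditional types given $y^n$, in the spirit of universal conditional source coding.

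For the construction, I would fix $y^n\in\set{Y}^n$, let $Q_{y^n}$ denote its type, and partition $\set{X}^n$ into conditional type classes ($V$-shells) $T_V(y^n)$ indexed by stochastic matrices $V\colon\set{Y}\to\set{X}$; the number of such conditional types is at most $(n+1)^{\card{\set{X}}\card{\set{Y}}}$. Using $\card{T_V(y^n)}\leq 2^{nH(V|Q_{y^n})}$, I would further partition each $V$-shell into $2^{n(R-\delta'_n)}$ subsets, each of cardinality at most $\bigl\lceil 2^{n(H(V|Q_{y^n})-R+\delta'_n)}\bigr\rceil$, choosing $\delta'_n$ just large enough (essentially $n^{-1}\card{\set{X}}\card{\set{Y}}\log(n+1)$) that the total number of subsets per $y^n$ does not exceed $2^{nR}$. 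Enumerating these subsets with labels in $\{1,\ldots,2^{nR}\}$ defines $f_n(\cdot,y^n)$. Crucially, this construction depends only on $y^n$ and not on $P_{X,Y}$, which is what makes the encoder universal.

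For the analysis, I would decompose the expectation according to the type $Q_Y$ of $y^n$ and the conditional type $V$ of $x^n$ given $y^n$, bound $\card{f_n^{-1}(f_n(x^n,y^n),y^n)}$ by the subset cardinality above, and apply~\eqref{eq:ceiling} to split off the additive $1$. The standard type estimates $P^n_Y(T_{Q_Y})\leq 2^{-nD(Q_Y||P_Y)}$ and $P^n_{X|Y}(T_V(y^n)\mid y^n)\leq 2^{-nD(V||P_{X|Y}|Q_Y)}$ then produce an exponent of the form $\rho H(V|Q_Y)-D(Q_Y||P_Y)-D(V||P_{X|Y}|Q_Y)$, and maximizing this jointly over $Q_Y$ and $V$ yields $\rho H_{\rhotrans}(X_1|Y_1)$ via the conditional variational formula
\begin{equation}
H_{\rhotrans}(X_1|Y_1)=\max_{Q_{XY}}\bigl\{H(X|Y)_{Q_{XY}}-\rho^{-1}D(Q_{XY}||P_{X,Y})\bigr\}.
\end{equation}
The two type-counting prefactors, together with the $2^\rho$ produced by~\eqref{eq:ceiling}, then combine into an additive correction of the form claimed for $\delta_n$.

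The main obstacle is establishing the conditional variational formula above. I would derive it by applying~\eqref{eq:renyi_variation} to each slice $P_{X|Y}(\cdot|y)$ to obtain $H_{\rhotrans}(X|Y=y)$ in variational form, and then performing a Donsker--Varadhan-style maximization over $Q_Y$ to reassemble Arimoto's $H_{\rhotrans}(X|Y)$. A secondary, essentially routine, task is to track the polynomial prefactors from both the marginal type-count and the conditional type-count carefully enough to match the precise expression for $\delta_n$; once these ingredients are in hand, the argument proceeds almost verbatim from the unconditional case in Theorem~\ref{thm:universal}.
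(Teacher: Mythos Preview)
Your proposal is correct and follows essentially the same approach as the paper: the paper likewise builds $f_n(\cdot,y^n)$ by partitioning each $V$-shell $T_V(y^n)$ into $2^{n(R-\delta_n')}$ pieces with $\delta_n'=n^{-1}\card{\set{X}}\card{\set{Y}}\log(n+1)$, bounds the conditional expectation via~\eqref{eq:ceiling} and $P^{(n)}_{X|Y}(T_V(y^n)\mid y^n)\leq 2^{-nD(V||P_{X|Y}|P_{y^n})}$, then averages over the type of $y^n$ and invokes exactly the variational identity $H_{\rhotrans}(X_1|Y_1)=\max_{Q,V}\{H(V|Q)-\rho^{-1}D(Q\circ V||P_{X,Y})\}$, which it proves in an appendix by two applications of Jensen's inequality (the slice-wise step followed by the ``Donsker--Varadhan'' step you describe). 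The only thing you have left implicit that the paper makes explicit is the chain rule $D(Q\circ V||P_{X,Y})=D(Q||P_Y)+D(V||P_{X|Y}|Q)$ linking your exponent to the variational formula, but you clearly have this in mind.
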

\begin{proof}
We fix an arbitrary $y^n \in \set{Y}^n$ and partition $\set{X}^n$ into the different
$V$-shells $T_V(y^n)$ (see~\cite[Chapter~2]{csiszar2011information}) of which
there are less than $(n+1)^{\card{\set{X}}\card{\set{Y}}}$. We then partition each $V$-shell into
$2^{n(R-\delta_n')}$ subsets of cardinality at most 
$\lceil \card{T_V(y^n)} 2^{-n(R-\delta'_n)}\rceil$ where $\delta_n' = n^{-1} \card{\set{X}}\card{\set{Y}} \log(n+1)$. Since
$\card{T_V(y^n)}\leq  2^{n H(V|P_{y^n})}$, where $P_{y^n}$ denotes the type of
$y^n$, each~$x^n \in T_V(y^n)$ will end up in a subset of cardinality at most 
\begin{equation}
\bigl\lceil 2^{n(H(V|P_{y^n})-R+\delta_n')}\bigr\rceil.
\end{equation}
From this partition we construct $f_n(\cdot, y^n) \colon \set{X} \to \{1,\ldots, 2^{nR}\}$ 
by enumerating the subsets with the numbers $1$ through $2^{nR}$
and by mapping to each $m \in \{1,\ldots, 2^{nR}\}$ the $x^n$'s that comprise
the $m$-th subset. Carrying out this construction for every $y^n \in \set{Y}^n$
yields an encoder $f\colon \set{X}\times\set{Y}\to \{1,\ldots, 2^{nR}\}$. 
Suppose now that~$\{(X_i,Y_i)\}_{i=1}^\infty$ is IID $P_{X,Y}$
and observe that for every $y^n\in\set{Y}^n$ with $P_Y^{(n)}(y^n)>0$, 
\begin{align}
&\sum_{x^n \in \set{X}^n} P^{(n)}_{X|Y}(x^n|y^n)
\bigcard{f_n^{-1}\bigl(f_n(x^n,y^n),y^n\bigr)}^\rho\notag\\
&\leq \!\!\!\sum_{V:T_V(y^n)\neq \emptyset}
\sum_{x^n \in T_V(y^n)} P^{(n)}_{X|Y}(x^n|y^n) \bigl\lceil
2^{n(H(V|P_{y^n})-R+\delta_n')} \rceil^\rho\label{eq:universal_si_919}\\
&<1+2^\rho \!\!\!\!\!\!\sum_{V:T_V(y^n) \neq \emptyset}\!\!\!
2^{n\rho(H(V|P_{y^n})-R+\delta_n')}\!\!\!\!\!\! \sum_{x^n \in T_V(y^n)} P^{(n)}_{X|Y}(x^n|y^n)\label{eq:universal_si_923}\\
&<1+ 2^{\rho} \!\!\!\!\!\!\sum_{V:T_V(y^n) \neq \emptyset} 2^{-n
D(V||P_{X|Y}|P_{y^n})}2^{n\rho(H(V|P_{y^n})-R+\delta_n')}.
\label{eq:universal_si_921}
\end{align}
Here~\eqref{eq:universal_si_919} follows from the construction of $f_n$;
\eqref{eq:universal_si_923} follows from~\eqref{eq:ceiling};
\eqref{eq:universal_si_921} follows because conditional on $Y^n=y^n$ the
probability that $X^n$ is in the $V$-shell of $y^n$ is at most
$2^{-nD(V||P_{X|Y}|P_{y^n})}$. Noting that whether $T_V(y^n)$ is nonempty depends
on $y^n$ only via its type, it follows that the sum
in~\eqref{eq:universal_si_921} depends on $y^n$ only via $P_{y^n}$. Noting
further that the probability that $Y^n$ is of type $Q \in \set{P}_n(\set{Y})$ is
at most~$2^{-nD(Q||P_Y)}$ it follows
from~\eqref{eq:universal_si_919}--\eqref{eq:universal_si_921} upon
taking expectation with respect to $Y^n$ that
\begin{align}
&\bigE{\card{f^{-1}_n(f_n(X^n,Y^n),Y^n)}^\rho}< 1+ 2^{\rho}\!\!\! \sum_{Q \in \set{P}_n(\set{Y})}2^{-nD(Q||P_{Y})}\notag\\
&\qquad\times\sum_{V} 2^{-n
D(V||P_{X|Y}|P_{y^n})}2^{n\rho(H(V|P_{y^n})-R+\delta_n')},
\label{eq:universal_si_937}
\end{align}
where for a given $Q\in \set{P}_n(\set{Y})$ 
the inner sum extends over all~$V$ such that $T_V(y^n)$ is not empty for some
(and hence all)~$y^n$ of type $Q$. In Appendix~\ref{sec:cond_renyi_identity} it
is shown that
\begin{equation}
\label{eq:cond_renyi_identity}
H_{\rhotrans}(X_1|Y_1) = \max_{\substack{Q\in\set{P}(\set{Y})\\V\in
\set{P}(\set{X}|\set{Y})}} H(V|Q) - \rho^{-1} D(Q\circ V||P_{X,Y}).
\end{equation}
Using~\eqref{eq:cond_renyi_identity}, the identity
\begin{equation}
D(Q\circ V||P_{X,Y}) = D(Q||P_Y) + D(V||P_{X|Y}|Q),
\end{equation}
and the fact that the number of types of sequences in $\set{Y}^n$ is less than
$(n+1)^{\card{\set{Y}}}$ and the number of conditional types $V$ is less than
$(n+1)^{\card{\set{X}}\card{\set{Y}}}$, it follows that 
the RHS of~\eqref{eq:universal_si_937} is upper-bounded by the RHS of~\eqref{eq:universal_si_884}.
\end{proof}

\section{Coding for Tasks with a Fidelity Criterion}
\label{sec:lossy}
In this section we study a rate-distortion version of the problem described in
Section~\ref{sec:introduction}. We only treat IID sources and single-letter
distortion functions. Suppose that the source $\{X_i\}_{i=1}^\infty$ 
generates tasks from a finite set of tasks $\set{X}$ IID according to $P$.
Let $\hat{\set{X}}$ be some other finite set of tasks, and let $d\colon
\set{X}\times\hat{\set{X}} \to [0,\infty)$ be a function that measures the
dissimilarity, or distortion, between any pair of tasks in $\set{X} \times
\hat{\set{X}}$. The distortion function $d$ extends to $n$-tuples of tasks in the usual way:
\begin{equation}
d(x^n,\hat{x}^n) = \frac{1}{n}\sum_{i=1}^n d(x_i,\hat{x}_i),\quad
(x^n,\hat{x}^n) \in \set{X}^n\times \hat{\set{X}}^n.
\end{equation}
We assume that for every $x\in \set{X}$ there is some $\hat{x} \in \hat{\set{X}}$
for which $d(x,\hat{x})=0$, i.e.,  
\begin{equation}
\min_{\hat{x}\in\hat{\set{X}}} d(x,\hat{x}) = 0,\quad x\in \set{X}.
\end{equation}
We describe the first~$n$ tasks $X^n$ using~$nR$ bits with an encoder
\begin{equation}
\label{eq:lossy_encoder}
f\colon \set{X}^n \to \{1,\ldots, 2^{nR}\}.
\end{equation}
Subsequently, the description $f(X^n)$ of $X^n$ is decoded into a subset of
$\hat{\set{X}}^n$ by a decoder 
\begin{equation}
\label{eq:lossy_decoder}
\varphi\colon \{1,\ldots, 2^{nR}\} \to 2^{\hat{\set{X}}^n},
\end{equation}
where $2^{\hat{\set{X}}^n}$ denotes the collection of all subsets of $\hat{\set{X}}^n$. 
We require that the subset produced by the decoder always contain at least one $n$-tuple of tasks within distortion $D$ of
$X^n$,
i.e., we require
\begin{equation}
\label{eq:distortion}
\min_{\hat{x}^n \in \varphi(f(x^n))} d(x^n,\hat{x}^n) \leq D,\quad x^n\in\set{X}.
\end{equation}
All $n$-tuples of tasks in the set $\varphi(f(X^n))$ are performed. 
The next theorem shows that the infimum of all rates~$R$ for which the
$\rho$-th moment of the ratio of performed tasks to $n$ can be driven to one as
$n$ tends to infinity subject to the constraint~\eqref{eq:distortion} is given by
\begin{equation}
\label{eq:renyi_rd}
R_\rho(P,D) \triangleq \max_{Q\in\set{P}(\set{X})} R(Q,D)-\rho^{-1} D(Q||P),
\end{equation}
where $R(Q,D)$ is the classical rate-distortion function (see, e.g.,
\cite[Chapter~7]{csiszar2011information}) evaluated at the distortion level
$D$ for an IID Q source and distortion function $d$. The function~$R_\rho(P,D)$
(multiplied by $\rho)$ has previously appeared
in~\cite{arikan1998guessing} in the context of guessing.
\begin{theorem}
\label{thm:lossy}
Let $\{X_i\}_{i=1}^\infty$ be an IID $P$ source with finite alphabet $\set{X}$,
and let $D \geq 0$ and $\rho>0$.
\begin{enumerate}
\item If $R> R_\rho(P,D)$, then there exist $(f_n,\varphi_n)$ as
in~\eqref{eq:lossy_encoder} and~\eqref{eq:lossy_decoder}
satisfying~\eqref{eq:distortion} such that
\begin{equation}
\lim_{n\to\infty} \bigE{\card{\varphi_n(f_n(X^n))}^\rho} =1.
\end{equation}
\item If $R< R_\rho(P,D)$, then for any $(f_n,\varphi_n)$ 
as in~\eqref{eq:lossy_encoder} and~\eqref{eq:lossy_decoder}
satisfying~\eqref{eq:distortion}, 
\begin{equation}
\label{eq:lossy_explode}
\lim_{n\to\infty} \bigE{\card{\varphi_n(f_n(X^n))}^\rho} =\infty.
\end{equation}
\end{enumerate}
\end{theorem}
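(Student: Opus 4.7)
The plan is to adapt the proof of Theorem~\ref{thm:universal} to the lossy setting, replacing the identity ``cover'' of each type class with a distortion-$D$ cover furnished by the type covering lemma, and using the identity~\eqref{eq:renyi_rd} in place of~\eqref{eq:renyi_variation}.

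For Part~1 (direct), I would invoke the type covering lemma (see, e.g.,~\cite[Lemma~9.1]{csiszar2011information}) to produce, for each type $Q\in\set{P}_n(\set{X})$, a set $\set{C}_Q\subseteq\hat{\set{X}}^n$ of cardinality at most $2^{n(R(Q,D)+\epsilon_n)}$ (with $\epsilon_n\to 0$ depending only on the alphabets) that $D$-covers $T_Q$. I would then partition each $\set{C}_Q$ into at most $2^{n(R-\delta_n)}$ groups, each of cardinality at most $\lceil 2^{n(R(Q,D)-R+\delta_n+\epsilon_n)}\rceil$, where $\delta_n=n^{-1}\card{\set{X}}\log(n+1)$ ensures that the total number of groups across all types does not exceed $2^{nR}$. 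Each group receives a distinct label in $\{1,\ldots,2^{nR}\}$: the encoder $f_n$ sends the label of the group containing a fixed $D$-cover point for $x^n$ in $\set{C}_Q$ (where $Q$ is the type of $x^n$), and the decoder $\varphi_n$ outputs the entire labeled group. Constraint~\eqref{eq:distortion} then holds by construction, and following the steps in the proof of Theorem~\ref{thm:universal}---with the identity~\eqref{eq:renyi_rd} used in the final step---yields
\begin{equation*}
\bigE{\card{\varphi_n(f_n(X^n))}^\rho} \leq 1 + 2^\rho (n+1)^{\card{\set{X}}}\, 2^{-n\rho(R - R_\rho(P,D) - \delta_n - \epsilon_n)},
\end{equation*}
which tends to one whenever $R>R_\rho(P,D)$.

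For Part~2 (converse), I would fix any $(f_n,\varphi_n)$ satisfying~\eqref{eq:distortion} and any type $Q\in\set{P}_n(\set{X})$ with $\supp(Q)\subseteq\supp(P)$, and lower-bound the contribution of $T_Q$ to the moment. The key type-counting step is
\begin{equation*}
K_Q \eqdef \max_{\hat{x}^n\in\hat{\set{X}}^n} \bigcard{\{x^n\in T_Q: d(x^n,\hat{x}^n)\leq D\}} \leq (n+1)^{\card{\set{X}}\card{\hat{\set{X}}}}\, 2^{n(H(Q)-R(Q,D))},
\end{equation*}
which follows by decomposing the set on the left according to the joint type of $(x^n,\hat{x}^n)$ and using $H(V|\hat{Q})=H(Q)-I_{V\hat Q}(X;\hat X)\leq H(Q)-R(Q,D)$ (the constraints that $V\hat{Q}$ have $\set{X}$-marginal $Q$ and $\sum_{x,\hat x}V(x|\hat x)\hat Q(\hat x)d(x,\hat x)\leq D$ push $I_{V\hat Q}(X;\hat X)$ above $R(Q,D)$). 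Constraint~\eqref{eq:distortion} then gives $\card{f_n^{-1}(m)\cap T_Q}\leq K_Q\,\card{\varphi_n(m)}$ for every $m$. Writing $a_m=\card{f_n^{-1}(m)\cap T_Q}$ and $b_m=\card{\varphi_n(m)}$, noting that at most $2^{nR}$ of the $a_m$ are nonzero, and applying Jensen's inequality to the convex function $\xi\mapsto\xi^{1+\rho}$ gives
\begin{equation*}
\sum_{x^n\in T_Q}\card{\varphi_n(f_n(x^n))}^\rho = \sum_m a_m b_m^\rho \geq K_Q^{-\rho}\sum_m a_m^{1+\rho} \geq \frac{\card{T_Q}^{1+\rho}}{K_Q^{\rho}\, 2^{nR\rho}}.
\end{equation*}
Multiplying by $P^n(x^n)=2^{-n(H(Q)+D(Q||P))}$ for $x^n\in T_Q$ and using $\card{T_Q}\geq 2^{nH(Q)}/(n+1)^{\card{\set{X}}}$, the $H(Q)$ terms cancel and we obtain
\begin{equation*}
\bigE{\card{\varphi_n(f_n(X^n))}^\rho}\geq \frac{2^{n\rho(R(Q,D)-R-\rho^{-1}D(Q||P))}}{(n+1)^{(1+\rho)\card{\set{X}}+\rho\card{\set{X}}\card{\hat{\set{X}}}}}.
\end{equation*}
Choosing a sequence of types $Q_n\in\set{P}_n(\set{X})$ supported in $\supp(P)$ that approach a maximizer of~\eqref{eq:renyi_rd}, and invoking continuity of $R(\cdot,D)$ and $D(\cdot||P)$, the exponent tends to $\rho(R_\rho(P,D)-R)>0$, establishing~\eqref{eq:lossy_explode}.

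The main obstacle lies in the converse, specifically the type-counting bound on $K_Q$ and the continuity argument needed to approximate the (possibly irrational) maximizer of~\eqref{eq:renyi_rd} by $n$-types; the remaining ingredients (the type covering lemma, Jensen's inequality, the identity~\eqref{eq:renyi_rd}) are standard.
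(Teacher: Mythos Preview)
Your direct part is essentially the paper's proof: type covering lemma, partition each covering set into roughly equal groups, enumerate, and bound the moment via~\eqref{eq:ceiling} and~\eqref{eq:renyi_rd}. Nothing to add there.

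Your converse, however, is genuinely different from the paper's. The paper does \emph{not} argue type by type. Instead, it first reduces (without loss of generality) to disjoint decoder outputs $\varphi_n(m)$, picks a measurable selector $g_n\colon\set{X}^n\to\hat{\set{X}}^n$ with $g_n(x^n)\in\varphi_n(f_n(x^n))$ and $d(x^n,g_n(x^n))\leq D$, pushes $P^n$ forward through $g_n$ to obtain a PMF $\tilde{P}_n$ on $\hat{\set{X}}^n$, and then applies the \emph{lossless} lower bound~\eqref{eq:oneshot_lb} directly to $\tilde{P}_n$ and the partition induced by $\varphi_n$. This yields $\bigE{\card{\varphi_n(f_n(X^n))}^\rho}\geq 2^{\rho(H_{\rhotrans}(\tilde{P}_n)-nR)}$, and the remaining work is to show $H_{\rhotrans}(\tilde{P}_n)\geq nR_\rho(P,D)$ via the variational formula~\eqref{eq:renyi_variation}, the data processing inequality for relative entropy, and the single-letter rate-distortion converse applied to an auxiliary IID-$Q_\star$ source passed through the deterministic channel $g_n$.

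Your route is a clean Method-of-Types argument: the ``reverse type covering'' bound on $K_Q$ replaces the data processing step, and Jensen on $\xi\mapsto\xi^{1+\rho}$ replaces the H\"older step hidden in~\eqref{eq:oneshot_lb}. What the paper gains is a non-asymptotic inequality valid for every $n$ with no polynomial prefactors and no need to approximate the maximizer $Q_\star$ by $n$-types or to invoke continuity of $R(\cdot,D)$; it also makes the reduction to the lossless case explicit. What your approach gains is self-containment (no appeal to~\eqref{eq:oneshot_lb} or to the classical rate-distortion converse as a black box) and a more transparent combinatorial picture of why the exponent must be at least $\rho(R(Q,D)-\rho^{-1}D(Q\|P)-R)$ for each type $Q$. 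Both are correct; the continuity of $Q\mapsto R(Q,D)$ you need is standard (e.g.\ via the uniform continuity of the rate-distortion function on the simplex), and the maximizing $Q_\star$ automatically satisfies $\supp(Q_\star)\subseteq\supp(P)$, so the type approximation goes through.
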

It follows immediately from Theorem~\ref{thm:lossy} that $R_\rho(P,D)$ is nonnegative 
and nondecreasing in $\rho>0$. Some other properties are
(see~\cite{arikan1998guessing} for proofs):
\begin{enumerate}
\item $R_\rho(P,D)$ is nonincreasing, continuous and convex in $D\geq 0$.
\label{item:continuous}
\item\label{item:dzero} $R_\rho(P,0) = H_{\rhotrans}(P)$.
\item $\lim_{\rho \to 0} R_\rho(P,D) = R(P,D)$. 
\item $\lim_{\rho \to\infty} R_\rho(P,D) = \max_{Q\in\set{P}(\set{X})} R(Q,D)$.
\end{enumerate}
The fact that $R_{\rho}(P,D)$ is a continuous function of $D$
(Property~\ref{item:continuous}) allows us to strengthen the converse statement
in Theorem~\ref{thm:lossy} as follows. Suppose that for every positive integer $n$ the
encoder/decoder pair $(f_n,\varphi_n)$ is as in~\eqref{eq:lossy_encoder}
and~\eqref{eq:lossy_decoder} with $R<R_\rho(P,D)$ and
satisfies~\eqref{eq:distortion} for some $D_n$ such that $\limsup_{n\to\infty}
D_n \leq D$. Then~\eqref{eq:lossy_explode} holds. Indeed, continuity implies that
$R<R_\rho(P,D+\epsilon)$ for a sufficiently small $\epsilon>0$, and $\limsup_{n\to\infty} D_n
\leq D$ implies that $D_n \leq D+\epsilon$ for all sufficiently large $n$. The
claim thus follows from the converse part of Theorem~\ref{thm:lossy} with $D$
replaced by $D+\epsilon$. 

Considering Property~\ref{item:dzero}, Theorem~\ref{thm:main} particularized to
IID sources can be recovered
from Theorem~\ref{thm:lossy} by taking $\hat{\set{X}} =
\set{X}$ and the Hamming distortion function 
\begin{equation}
\label{eq:hamming}
d(x,\hat{x}) = \begin{cases} 0& \text{if
$x=\hat{x}$,}\\1&\text{otherwise.}\end{cases}
\end{equation}

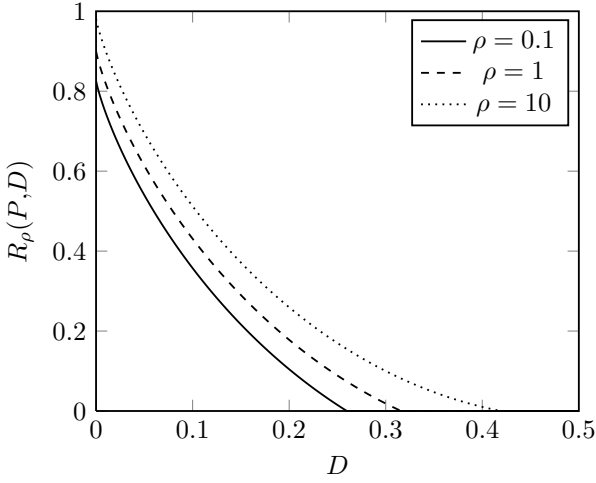
\begin{figure}
\centering
\begin{tikzpicture}
	\begin{axis}[xlabel=$D$, ylabel=$R_\rho(P{,}D)$, xmin=0, ymin=0, xmax=0.5,
	ymax=1, cycle list name=linestyles]
	\addplot table[x=D,y=rho_small] {R_plot3.dat};
	\addlegendentry{$\rho=0.1$}
	\addplot table[x=D,y=rho_medium] {R_plot3.dat};
	\addlegendentry{$\rho=1$}
	\addplot table[x=D,y=rho_large] {R_plot3.dat};
	\addlegendentry{$\rho=10$}
	\end{axis}
\end{tikzpicture}
\caption{$R_\rho(P,D)$ in bits for an IID Bernoulli-$(1/4)$ source and Hamming distortion}
\label{fig:plot}
\end{figure}
It was noted in~\cite{arikan1998guessing} that~$R_\rho(P,D)$ can be expressed in
closed form for binary sources and Hamming distortion:
\begin{proposition}
\label{prop:binary_hamming}
If $\set{X}=\hat{\set{X}} = \{0,1\}$, $d$ is the Hamming distortion
function~\eqref{eq:hamming},
and $P(0)=1-P(1)=p$, then 
\begin{equation*}
R_\rho(P,D) = \begin{cases} H_{\rhotrans}(p)-h(D)& \text{if $0\leq D
<h^{-1}\bigl(H_{\rhotrans}(p)\bigr)$,}\\ 0 &\text{if $D\geq
h^{-1}\bigl(H_{\rhotrans}(p)\bigr)$,}\end{cases}
\end{equation*}
where $h^{-1}(\cdot)$ denotes the inverse of the binary entropy 
function~$h(\cdot)$ on the interval $[0,1/2]$ and, with slight abuse of
notation, 
$H_{\rhotrans}(p) = H_{\rhotrans}(P)$. 
\end{proposition}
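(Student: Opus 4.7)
The plan is to evaluate the maximization in~\eqref{eq:renyi_rd} in closed form by combining the binary-Hamming rate-distortion formula with the variational identity~\eqref{eq:renyi_variation}. For $D\leq 1/2$ and any binary PMF with $Q(0)=q$, one has $R(Q,D)=[h(q)-h(D)]^+$, so with $\varphi(q)\eqdef h(q)-\rho^{-1}D(Q\|P)$ the maximization splits according to the sign of the positive part:
\begin{equation*}
R_\rho(P,D)=\max\bigl\{A(D),\,B(D)\bigr\},
\end{equation*}
where $A(D)\eqdef\sup_{q:\,h(q)>h(D)}[\varphi(q)-h(D)]$ and $B(D)\eqdef\sup_{q:\,h(q)\leq h(D)}[-\rho^{-1}D(Q\|P)]$.

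Next I would invoke~\eqref{eq:renyi_variation} (specialized to the binary alphabet) to identify $H_{\rhotrans}(p)=\max_q\varphi(q)$, attained at some $q^\star$. Because $h$ is concave and $D(Q\|P)$ is convex in $q$, the function $\varphi$ is concave; moreover $\varphi(q^\star)=H_{\rhotrans}(p)$ together with $D(Q^\star\|P)\geq 0$ yields $h(q^\star)\geq H_{\rhotrans}(p)$. I would then split into two regimes. If $h(D)<H_{\rhotrans}(p)$, then $h(D)<h(q^\star)$, so $q^\star$ lies in the feasible set for $A(D)$ and, by concavity of $\varphi$, is still the constrained maximizer, giving $A(D)=H_{\rhotrans}(p)-h(D)>0$, which dominates $B(D)\leq 0$. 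If instead $h(D)\geq H_{\rhotrans}(p)$, the global bound $\varphi(q)\leq H_{\rhotrans}(p)\leq h(D)$ forces $A(D)\leq 0$; meanwhile the inequality $H_{\rhotrans}(p)\geq H(p)=h(p)$ (from monotonicity of R\'enyi entropy in its order, since $\rhotrans\leq 1$) places $p$ inside the feasible set of $B(D)$, so $B(D)=0$ is attained at $Q=P$. Hence $R_\rho(P,D)=0$ in this regime.

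Translating the comparison $h(D)\lessgtr H_{\rhotrans}(p)$ into $D\lessgtr h^{-1}(H_{\rhotrans}(p))$ is immediate from the strict monotonicity of $h$ on $[0,1/2]$ and the fact that $h^{-1}(H_{\rhotrans}(p))\in[0,1/2]$; the case $D>1/2$ is trivial because then $R(Q,D)=0$ for every $Q$ (so $R_\rho(P,D)=0$) and $D>1/2\geq h^{-1}(H_{\rhotrans}(p))$. The main subtlety is the regime $h(D)\geq H_{\rhotrans}(p)$: one must simultaneously \emph{cap} $A(D)$ at zero (using concavity of $\varphi$, so that its unconstrained maximum $H_{\rhotrans}(p)$ is a global upper bound) and \emph{attain} zero in $B(D)$ (using $H_{\rhotrans}(p)\geq h(p)$ to ensure that $Q=P$ lies in the feasible region $\{q:h(q)\leq h(D)\}$). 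Once these two points are in hand, the piecewise formula in the proposition follows by direct case analysis.
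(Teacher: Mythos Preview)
Your argument is correct. The paper does not supply its own proof of this proposition: immediately after the statement it writes ``For a proof of Proposition~\ref{prop:binary_hamming} see~\cite[Theorem~3]{arikan1998guessing} and subsequent remarks,'' so the comparison is between your self-contained derivation and an external citation.

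Your route is the natural one given the tools already developed in the paper: you combine the binary-Hamming rate-distortion formula $R(Q,D)=[h(q)-h(D)]^{+}$ with the variational identity~\eqref{eq:renyi_variation}, which recasts $H_{\rhotrans}(p)$ as $\max_q\{h(q)-\rho^{-1}D(Q\|P)\}$. The split into the two regions $\{h(q)>h(D)\}$ and $\{h(q)\le h(D)\}$ is clean, and the two observations that drive the case analysis---$h(q^\star)\ge H_{\rhotrans}(p)$ from nonnegativity of divergence, and $h(p)\le H_{\rhotrans}(p)$ from monotonicity of R\'enyi entropy in the order---are exactly right. A minor remark: you do not actually need concavity of $\varphi$ in either regime. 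In the first regime it suffices that the unconstrained maximizer $q^\star$ lies in the feasible set; in the second, the bound $\varphi(q)\le H_{\rhotrans}(p)$ is just the definition of $H_{\rhotrans}(p)$ as the maximum. Concavity is true but not load-bearing.
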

For a proof of Proposition~\ref{prop:binary_hamming} see \cite[Thereom
3]{arikan1998guessing} and subsequent remarks. 
A plot of~$R_\rho(D)$ for~$p=1/4$ and different values of~$\rho$ is shown in Figure~\ref{fig:plot}.

We now prove the direct part of Theorem~\ref{thm:lossy}. Fix $D \geq 0$ and
select an arbitrary $\delta>0$.
According to the Type Covering Lemma~\cite[Lemma~9.1]{csiszar2011information},
there is a positive integer $n(\delta)$ such that for all $n \geq n(\delta)$ and every type~$Q\in
\set{P}_n(\set{X})$ we can find a set $B^{(n)}_Q \subset \hat{\set{X}}^n$ of
cardinality at most $2^{n(R(Q,D)+\delta)}$ that covers $T^{(n)}_Q$ 
in the sense that for every $x^n \in T^{(n)}_Q$ there is
at least one $\hat{x}^n
\in B^{(n)}_Q$ with $d(x^n,\hat{x}^n) \leq D$. 
We henceforth assume that~$n \geq n(\delta)$. For each type $Q\in \set{P}_n(\set{X})$ we partition $B_Q^{(n)}$
into $2^{n(R-\delta_n)}$ subsets of cardinality at most
\begin{equation}
\bigl\lceil 2^{n(R(Q,D)+\delta-R+\delta_n)}\bigr\rceil, 
\end{equation}
where $\delta_n= n^{-1} \card{\set{X}} \log(n+1)$. Since the total number of
types is less than $(n+1)^{\card{\set{X}}}$, we can enumerate all the subsets of
all the different $B_Q^{(n)}$'s with the numbers $1,\ldots, 2^{nR}$. Let
$\varphi_n\colon \{1,\ldots, 2^{nR}\} \to 2^{\hat{\set{X}}^n}$ 
be the mapping that maps the index to the corresponding subset. (If there are
less than $2^{nR}$ subsets in our construction, then we map the remaining indices
to, say, the empty set.) 
We then construct~$f_n\colon \set{X}^n \to \{1,\ldots, 2^{nR}\}$ by mapping each
$x^n\in \set{X}^n$ of type $Q$ to an index of a subset of $B_Q^{(n)}$ that
contains an~$\hat{x}^n$ with $d(x^n,\hat{x}^n)\leq D$. Note that the encoder/decoder
pair thus constructed satisfies~\eqref{eq:distortion}, and
\begin{align}
&\bigE{\card{\varphi_n(f_n(X^n))}^\rho}\notag\\
&=\sum_{x^n \in \set{X}^n} P^n(x^n)
\card{\varphi_n(f_n(x^n))}^\rho\\
&\leq \sum_{Q \in \set{P}_n(\set{X})} \sum_{x^n \in T^{(n)}_Q} P^n(x^n) \bigl\lceil
2^{n(R(Q,D)+\delta-R+\delta_n)}\bigr\rceil^\rho\label{eq:lossy_card}\\
&<1+ 2^\rho\!\!\!\!\!\!\sum_{Q \in
\set{P}_n(\set{X})}2^{n\rho(R(Q,D)+\delta-R+\delta_n)}\!\!\!\sum_{x^n \in T^{(n)}_Q}
P^n(x^n)\label{eq:lossy_ceiling}\\
&<1+2^{\rho}\!\!\!\!\!\!\sum_{Q \in \set{P}_n(\set{X})}
2^{-n\rho(R+\rho^{-1}D(Q||P)-R(Q,D)-\delta-\delta_n)}\label{eq:lossy_type}\\
&\leq 1+2^{-n\rho(R-R_\rho(P,D)-\delta-\delta_n')},\label{eq:lossy_types}
\end{align}
where 
\begin{equation}
\delta_n'= \frac{1+(1+\rho^{-1})\card{\set{X}}\log(n+1)}{n}.
\end{equation}
Here \eqref{eq:lossy_card} follows from the construction of $f_n$ and $\varphi_n$;
\eqref{eq:lossy_ceiling} follows
from~\eqref{eq:ceiling};
\eqref{eq:lossy_type} follows because the probability of an IID $P$ source emitting a sequence of type $Q$ is
at most $2^{-nD(Q||P)}$;  and~\eqref{eq:lossy_types} follows from the definition
of $R_\rho(P,D)$ in~\eqref{eq:renyi_rd} and the fact that
$\card{\set{P}_n(\set{X})}< 
(n+1)^{\card{\set{X}}}$. The proof of the direct part is completed by noting
that if $R>R_\rho(P,D)$, then for sufficiently small $\delta>0$
the RHS of~\eqref{eq:lossy_types} tends to one as $n$ tends to infinity.

To prove the converse, we fix for each $n\in\naturals$ an encoder/decoder pair $(f_n,\varphi_n)$ as
in~\eqref{eq:lossy_encoder} and~\eqref{eq:lossy_decoder}
satisfying~\eqref{eq:distortion}.
We may assume that 
\begin{equation}
\label{eq:disjoint}
\varphi_n(m)\cap \varphi_n(m') = \emptyset\quad\text{whenever $m\neq m'$}. 
\end{equation}
Indeed, if $m\neq m'$ and $\hat{x}^n \in \varphi_n(m)\cap \varphi_n(m')$,
then we can delete $\hat{x}^n$ from the larger of the two subsets, 
say~$\varphi_n(m)$, and map to $m'$ all the source sequences $x^n$ that were mapped
to $m$ by $f_n$ and satisfy $d(x^n,\hat{x}^n)\leq D$. This could only reduce the 
$\rho$-th moment of the number of performed tasks while preserving the
property~\eqref{eq:distortion}. 

Define the set
\begin{equation}
\set{Z}_n = \bigcup_{m=1}^{2^{nR}} \varphi_n(m).\label{eq:1042_5}
\end{equation}
The assumption~\eqref{eq:disjoint} implies that the union on the RHS of~\eqref{eq:1042_5} is disjoint. 
Consequently, we may define~$\mu_n(\hat{x}^n)$ for every $\hat{x}^n \in \set{Z}_n$ as the unique 
element of $\{1,\ldots,2^{nR}\}$ for which $\hat{x}^n
\in \varphi_n(\mu_n(\hat{x}^n))$. Moreover, \eqref{eq:distortion} guarantees the
existence of a mapping $g_n\colon \set{X}^n\to \set{Z}_n$ (not necessarily
unique) such that, for all $x^n\in\set{X}^n$,
\begin{equation}
\label{eq:984}
g_n(x^n) \in \varphi_n\bigl(f_n(x^n)\bigr)\quad\text{and}\quad d\bigl(x^n,g_n(x^n)\bigr)\leq D.
\end{equation}
We also define the PMF on $\set{Z}_n$, 
\begin{equation}
\tilde{P}_n(\hat{x}^n) = P^n\bigl(g_n^{-1}(\hat{x}^n)\bigr),\quad
\hat{x}^n\in\set{Z}_n,
\end{equation}
where 
\begin{equation}
g_n^{-1}(\hat{x}^n) = \{x^n\in \set{X}^n: g_n(x^n) = \hat{x}^n\}.
\end{equation}
With these definitions of $\mu_n$, $g_n$, and $\tilde{P}_n$, we have
\begin{align}
&\sum_{x^n\in\set{X}^n} P^n(x^n)
\bigcard{\varphi_n\bigl(f_n(x^n)\bigr)}^\rho\notag\\
&\quad=\sum_{\hat{x}^n\in\set{Z}_n}
P^n\bigl(g_n^{-1}(\hat{x}^n)\bigr)
\bigcard{\varphi_n\bigl(\mu_n(\hat{x}^n)\bigr)}^\rho\label{eq:1115_lossy_converse}\\
&\quad=\sum_{\hat{x}^n\in\set{Z}_n} \tilde{P}_n(\hat{x}^n)
\bigcard{\varphi_n\bigl(\mu_n(\hat{x}^n)\bigr)}^\rho\\
&\quad\geq 2^{\rho(H_{\rhotrans}(\tilde{P}_n)-nR)},\label{eq:978}
\end{align}
where the inequality~\eqref{eq:978} follows from~\eqref{eq:oneshot_lb} 
(with~$\set{Z}_n$, $\tilde{P}_n$, and~$\mu_n$ taking the roles of~$\set{X}$, $P$
and~$f$) by noting that~$\varphi_n=\mu_n^{-1}$.
In view of~\eqref{eq:1115_lossy_converse}--\eqref{eq:978} the converse is proved once we show that
\begin{equation}
\label{eq:renyi_of_bla}
H_{\rhotrans}(\tilde{P}_n) \geq n R_\rho(P,D). 
\end{equation}
To prove~\eqref{eq:renyi_of_bla}, note that on account
of~\eqref{eq:renyi_variation} we have for
every PMF $Q$ on $\set{Z}_n$
\begin{equation}
\label{eq:lossy_converse_13}
H_{\rhotrans}(\tilde{P}_n) \geq H(Q) - \rho^{-1} D(Q||\tilde{P}_n).
\end{equation}
The PMF $\tilde{P}_n$ can be written as
\begin{equation}
\label{eq:1023}
\tilde{P}_n=P^n W_n, 
\end{equation}
where $W_n$ is the deterministic channel from $\set{X}^n$ to
$\hat{\set{X}}^n$ induced by $g_n$:
\begin{equation}
W_n(\hat{x}^n|x^n) = \begin{cases} 1& \text{if $\hat{x}^n =  g_n(x^n)$},\\0&\text{otherwise.}\end{cases}
\end{equation}
Let $Q_\star$ be a PMF on $\set{X}$ that achieves the maximum in the definition
of $R_\rho(P,D)$, i.e., 
\begin{equation}
\label{eq:1061}
R_\rho(P,D) = R(Q_\star,D) - \rho^{-1} D( Q_\star||P).
\end{equation}
Substituting $Q_\star^n W_n$ for $Q$ in~\eqref{eq:lossy_converse_13} and
using~\eqref{eq:1023},
\begin{align}
H_{\rhotrans}(\tilde{P}_n) &\geq H(Q_\star^n W_n) - \rho^{-1}
D(Q_\star^n W_n||P^n W_n)\label{eq:lossy_converse_1354}\\
&\geq H(Q_\star^n W_n) - \rho^{-1}
D(Q_\star^n ||P^n)\label{eq:lossy_converse_1150}\\
&=H(Q_\star^n W_n) - n\rho^{-1} D(Q_\star
||P),\label{eq:lossy_converse_14}
\end{align}
where~\eqref{eq:lossy_converse_1150} follows from the Data Processing
Inequality \cite[Lemma 3.11]{csiszar2011information}. 
Let the source $\{\tilde{X}_i\}_{i=1}^\infty$ be IID $Q_\star$ 
and set $\hat{X}^n = g_n(\tilde{X}^n)$.
Then
\begin{align}
H(Q_\star^n W_n)&=H(\hat{X}^n)\label{eq:lossy_converse_1470}\\
&=I(\tilde{X}^n\wedge\hat{X}^n).\label{eq:lossy_converse_15}
\end{align}
By~\eqref{eq:984}, we have
\begin{equation}
\label{eq:lossy_converse_1175}
\E{d(\tilde{X}^n,\hat{X}^n)} \leq D,
\end{equation}
so applying~\cite[Theorem 9.2.1]{gallager1968information} (which is the main ingredient
in the classical rate-distortion converse) to the pair
$(\tilde{X}^n,\hat{X}^n)$ gives
\begin{align}
I(\tilde{X}^n\wedge\hat{X}^n) &\geq
n R\bigl(Q_\star,\E{d(\tilde{X}^n,\hat{X}^n)}\bigr)\label{eq:lossy_converse_1570}\\
&\geq n R(Q_\star,D),\label{eq:lossy_converse_16}
\end{align}
where~\eqref{eq:lossy_converse_16} follows from~\eqref{eq:lossy_converse_1175}
by the monotonicity of the rate-distortion function. 
Combining~\eqref{eq:lossy_converse_1570}--\eqref{eq:lossy_converse_16},
\eqref{eq:lossy_converse_1470}--\eqref{eq:lossy_converse_15},
\eqref{eq:lossy_converse_1354}--\eqref{eq:lossy_converse_14}, and~\eqref{eq:1061}
establishes~\eqref{eq:renyi_of_bla}.\qed

\section{Tasks with Costs}
\label{sec:costs}
We have so far assumed that every task 
requires an equal amount of effort. 
In this section, we discuss an extension where a
nonnegative, finite cost $c(x)$ is associated with each task $x\in\set{X}$.
For the sake of simplicity, we limit ourselves to IID sources and $\rho=1$. 

For an $n$-tuple of tasks $x^n \in \set{X}^n$, 
we denote by $c(x^n)$ the average cost per task:
\begin{equation}
c(x^n) = \frac{1}{n} \sum_{i=1}^n c(x_i).
\end{equation}
We still assume that $n$-tuples of tasks are describe using~$nR$ bits by an
encoder of the form $f\colon\set{X}^n\to\{1,\ldots,2^{nR}\}$, and that if
$x^n$ is assigned,
then all $n$-tuples in the set~$f^{-1}(f(x^n))$ are performed. 
Thus, if $x^n$ is assigned, then the average cost per assigned task is
\begin{equation}
c(f,x^n) \eqdef \sum_{\tilde{x}^n \in f^{-1}(f(x^n))} c(\tilde{x}^n).
\end{equation}
The following result extends Theorem~\ref{thm:main} to this setting (for IID
tasks and $\rho=1$). We focus on the case $\E{c(X_1)}>0$ because otherwise
we can achieve
\begin{equation}
\bigE{c(f,X^n)} =0
\end{equation}
using only one bit by setting $f(x^n)=1$ if $c(x^n)=0$ and $f(x^n)=2$ otherwise. 
\begin{theorem}
\label{thm:cost}
Let $\{X_i\}_{i=1}^\infty$ be IID with finite alphabet $\set{X}$ and
$\E{c(X_1)}>0$.
\begin{enumerate}
\item If $R> H_{1/2}(X_1)$, then there exist encoders $f_n \colon
\set{X}^n \to \{1,\ldots,2^{nR}\}$ such that
\begin{equation}
\lim_{n\to\infty}\bigE{c(f_n,X^n)} \to \E{c(X_1)}.
\end{equation}
\item If $R<H_{1/2}(X_1)$, then for any choice of encoders $f_n
\colon\set{X}^n \to \{1,\ldots,2^{nR}\}$,
\begin{equation}
\label{eq:cost_converse_blowup}
\lim_{n\to\infty}\bigE{c(f_n,X^n)} \to \infty.
\end{equation}
\end{enumerate}
\end{theorem}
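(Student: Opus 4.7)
The plan parallels Theorems~\ref{thm:main} and~\ref{thm:lossy}: a Cauchy--Schwarz converse and a type-class based direct construction. Throughout I use $\rhotrans=1/2$ (since $\rho=1$) and start from the identity
\begin{equation}
\bigE{c(f_n,X^n)} = \sum_m P^n(\set{L}_m)\, C(\set{L}_m),
\end{equation}
where $\set{L}_m = f_n^{-1}(m)$ and $C(\set{L}_m) = \sum_{\tilde{x}^n\in\set{L}_m} c(\tilde{x}^n)$. Because $X^n \in \set{L}_{f_n(X^n)}$ and $c\geq 0$, this already yields the trivial lower bound $\bigE{c(f_n,X^n)} \geq \bigE{c(X_1)}$, which will match the direct part if the excess can be driven to zero.

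For the converse I apply Cauchy--Schwarz twice: first bin-wise, $P^n(\set{L}_m) C(\set{L}_m) \geq \bigl(\sum_{x^n \in \set{L}_m} \sqrt{P^n(x^n) c(x^n)}\bigr)^2$, and then globally, $\sum_m a_m^2 \geq M^{-1}(\sum_m a_m)^2$ with $M = 2^{nR}$, to obtain
\begin{equation}
\bigE{c(f_n,X^n)} \geq 2^{-nR} S_n^2,\qquad S_n \eqdef \sum_{x^n} \sqrt{P^n(x^n) c(x^n)}.
\end{equation}
To evaluate $S_n$ asymptotically I introduce the tilted PMF $Q^\star(x) = \sqrt{P(x)}/2^{H_{1/2}(X_1)/2}$, under which $\sqrt{P^n(x^n)} = 2^{nH_{1/2}(X_1)/2}\, Q^{\star n}(x^n)$, so that $S_n = 2^{nH_{1/2}(X_1)/2}\,\operatorname{E}_{Q^\star}\bigl[\sqrt{c(X^n)}\bigr]$. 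Under $Q^{\star n}$ the quantity $c(X^n)=n^{-1}\sum_i c(X_i)$ is a bounded IID average, so the weak law of large numbers together with the bounded convergence theorem (applied to the bounded continuous map $\sqrt{\cdot}$) gives $\operatorname{E}_{Q^\star}\bigl[\sqrt{c(X^n)}\bigr] \to \sqrt{\operatorname{E}_{Q^\star}[c(X_1)]}$. The assumption $\bigE{c(X_1)}>0$ forces some $x$ with both $P(x)>0$ and $c(x)>0$, so $\operatorname{E}_{Q^\star}[c(X_1)]>0$; therefore $\bigE{c(f_n,X^n)} \geq (1-o(1))\operatorname{E}_{Q^\star}[c(X_1)]\,2^{n(H_{1/2}(X_1)-R)}$, which blows up for $R<H_{1/2}(X_1)$.

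For the direct part I re-use verbatim the universal construction of Theorem~\ref{thm:universal} with $\rho=1$: partition $\set{X}^n$ by type and then partition each type class $T_Q$ into bins of size at most $\lceil 2^{n(H(Q)-R+\delta'_n)}\rceil$, where $\delta'_n = n^{-1}\card{\set{X}}\log(n+1)$. The crucial simplifying fact is that $c(x^n)$ depends on $x^n$ only through its type and equals $\operatorname{E}_Q[c(X)]$ on $T_Q$. Using $\lceil a\rceil \leq 1+a$, $P^n(T_Q)\leq 2^{-nD(Q\|P)}$, and the variational identity~\eqref{eq:renyi_variation} $H_{1/2}(X_1) = \max_Q[H(Q)-D(Q\|P)]$, I obtain
\begin{equation}
\bigE{c(f_n,X^n)} \leq \bigE{c(X_1)} + c_{\max}(n+1)^{\card{\set{X}}}\, 2^{n(H_{1/2}(X_1)-R+\delta'_n)},
\end{equation}
where $c_{\max}=\max_{x\in\set{X}} c(x)<\infty$. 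Since $R>H_{1/2}(X_1)$, the second term vanishes as $n\to\infty$, and combined with the trivial lower bound this establishes $\bigE{c(f_n,X^n)} \to \bigE{c(X_1)}$.

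The main delicate step is the convergence $\operatorname{E}_{Q^\star}\bigl[\sqrt{c(X^n)}\bigr]\to\sqrt{\operatorname{E}_{Q^\star}[c(X_1)]}>0$ used in the converse: one must leverage the boundedness of $c$ on the finite alphabet to justify bounded convergence, and one must verify that the positivity of $\bigE{c(X_1)}$ transfers to positivity under the tilt $Q^\star$, which holds because $\supp(Q^\star)=\supp(P)$.
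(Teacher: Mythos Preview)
Your proof is correct. Both parts reach the same conclusions as the paper, with modest differences in execution.

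For the direct part, the paper takes a shorter route: it simply bounds $\bigE{c(f_n,X^n)} \leq \bigE{c(X_1)} + c_{\max}\bigl(\bigE{\card{f_n^{-1}(f_n(X^n))}}-1\bigr)$ for \emph{any} encoder and invokes Theorem~\ref{thm:main}. Your type-class analysis with $\operatorname{E}_Q[c(X)]$ constant on $T_Q$ is valid but effectively re-derives this reduction.

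For the converse, both proofs arrive at the same intermediate bound $\bigE{c(f_n,X^n)} \geq 2^{-nR} S_n^2$ with $S_n=\sum_{x^n}\sqrt{P^n(x^n)c(x^n)}$. You get there via bin-wise Cauchy--Schwarz followed by $\sum_m a_m^2 \geq M^{-1}(\sum_m a_m)^2$; the paper applies Cauchy--Schwarz once with the pair $\sqrt{P^n(x^n)c(f_n,x^n)}$, $\sqrt{c(x^n)/c(f_n,x^n)}$ and then shows $\sum_{x^n:c(x^n)>0}c(x^n)/c(f_n,x^n)\leq 2^{nR}$. The two arrangements are equivalent. The evaluation of $S_n$ differs more substantively: the paper lower-bounds $S_n$ via types, restricting to types containing some $x^\star$ with $P(x^\star)c(x^\star)>0$ and using $c(x^n)\geq c(x^\star)/n$ there. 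Your exponential-tilt argument---writing $S_n=2^{nH_{1/2}(X_1)/2}\,\operatorname{E}_{Q^\star}\bigl[\sqrt{c(X^n)}\bigr]$ and invoking the LLN with bounded convergence---is cleaner, yields the sharp constant, and makes transparent why $\bigE{c(X_1)}>0$ is needed (via $\supp(Q^\star)=\supp(P)$).
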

\begin{proof}[Proof of Theorem~\ref{thm:cost}]
We begin with the case $R>H_{1/2}(X_1)$, i.e., the direct part. 
Let us denote by $c_{\textnormal{max}}$ the largest cost 
of any single task in $\set{X}$
\begin{equation}
c_{\textnormal{max}} = \max_{x\in\set{X}} c(x).
\end{equation}
Select a sequence $f_n\colon\set{X}^n
\to\{1,\ldots,2^{nR}\}$ as in the direct part of Theorem~\ref{thm:main} and
observe that 
\begin{align}
&\bigE{c(f_n,X^n)}\notag\\
&= \sum_{x^n \in \set{X}^n} P^n(x^n) c(f_n,x^n)\\
&= \sum_{x^n\in\set{X}^n} P^n(x^n) \Bigl( c(x^n) + \sum_{\tilde{x}^n 
\in f_n^{-1}(f_n(x^n)) \setminus\{x^n\}} c(\tilde{x}^n)\Bigr)\\
&=\bigE{c(X_1)} + \sum_{x^n\in\set{X}^n} P^n(x^n)\sum_{\tilde{x}^n 
\in f_n^{-1}(f_n(x^n)) \setminus\{x^n\}} c(\tilde{x}^n)\\
&\leq \bigE{c(X_1)} + c_{\textnormal{max}} \sum_{x^n\in\set{X}^n} P^n(x^n)
\card{f_n^{-1}(f_n(x^n)) \setminus\{x^n\}}\label{eq:cost_converse_1}\\
&=\bigE{c(X_1)} + c_{\textnormal{max}} 
\bigl(\bigE{\card{f_n^{-1}(f_n(X^n))}} - 1\bigr),\label{eq:cost_converse_1826}
\end{align}
and the second term on the RHS of~\eqref{eq:cost_converse_1826} 
tends to zero as $n\to\infty$ by
Theorem~\ref{thm:main}. 

We now turn to the case $R< H_{1/2}(X_1)$, i.e., the converse part. 
If the minimum cost of any single task $c_{\textnormal{min}}$ is positive,
then~\eqref{eq:cost_converse_blowup} follows from 
the converse part of Theorem~\ref{thm:main} by replacing
in~\eqref{eq:cost_converse_1} $c_{\textnormal{max}}$ 
with $c_{\textnormal{min}}$ and ``$\leq$'' with~``$\geq$''.
If at least one task has zero cost (i.e., $c_{\textnormal{min}}=0$),
then we need a different proof. 

The assumption $\E{c(X_1)}>0$ implies 
that there is some $x^\star\in\set{X}$ with
$P(x^\star)c(x^\star)>0$. 
Using H\"older's Inequality as in~\eqref{eq:hoelder_alt} with $p=q=2$,
$a(x)=\sqrt{P^n(x^n) c(f_n,x^n)} $, and $b(x) = \sqrt{c(x^n)/c(f_n,x^n)}$ gives
\begin{align}
&\sum_{x^n \in \set{X}^n} P^n(x^n) c(f_n,x^n)\notag\\
&\quad\geq \sum_{x^n:c(x^n)>0} c(x^n) P^n(x^n) \frac{c(f_n,x^n)}{c(x^n)}\\
&\quad\geq \frac{\bigl(\sum_{x^n:c(x^n)>0} \sqrt{c(x^n)P^n(x^n)}\bigr)^2}
{\sum_{x^n:c(x^n)>0} \frac{c(x^n)}{c(f_n,x^n)}}.\label{eq:cost_converse_1848}
\end{align}
To bound the denominator on the RHS of~\eqref{eq:cost_converse_1848}, 
observe that
\begin{align}
&\sum_{x^n:c(x^n)>0} \frac{c(x^n)}{c(f_n,x^n)}\notag\\
&= \sum_{m=1}^{2^{nR}} \sum_{x^n \in f_n^{-1}(m), c(x^n)>0} \frac{c(x^n)}{
\sum_{\tilde{x}^n \in f_n^{-1}(m)} c(\tilde{x}^n)}\\
&\leq 2^{nR},\label{eq:cost_converse_18548}
\end{align}
where the inequality follows because for some $m$ the set $\{x^n \in
f_n^{-1}(m): c(x^n)>0\}$ may be empty. 
Combining~\eqref{eq:cost_converse_18548} and~\eqref{eq:cost_converse_1848} gives
\begin{multline}
\sum_{x^n \in \set{X}^n} P^n(x^n) c(f_n,x^n)\\
\geq 2^{-nR} \biggl( \sum_{x^n:
c(x^n)>0} \sqrt{c(x^n)P^n(x^n)}\biggr)^2.\label{eq:cost_converse_1854}
\end{multline}
We can bound the sum on the RHS of~\eqref{eq:cost_converse_1854} as follows.
\begin{align}
&\sum_{x^n: c(x^n)>0} \sqrt{c(x^n)P^n(x^n)}\notag\\
& \geq \sqrt{\frac{c(x^\star)}{n}} \sum_{Q \in \set{P}_n(\set{X}), Q(x^\star)>0} 
\sum_{x^n \in T_Q}\sqrt{P^n(x^n)}\label{eq:cost_conv_cost_lb}\\
& \geq \sqrt{\frac{c(x^\star)}{n}} 
\max_{\substack{Q\in\set{P}_n(\set{X})\\Q(x^\star)>0}} 2^{n(H(Q)-\delta_n)}
2^{-\frac{n}{2}(D(Q||P)+H(Q))}\label{eq:cost_conv_id}\\
&=2^{\frac{n}{2}(\max_{Q\in\set{P}_n(\set{X}),Q(x^\star)>0}H(Q)-D(Q||P)-\delta_n')}\\
&=2^{\frac{n}{2}(H_{1/2}(X_1) - \epsilon_n - \delta'_n)},\label{eq:cost_converse_1867}
\end{align}
where $\delta_n = n^{-1}\card{\set{X}} \log(n+1)$, where $\delta_n'
=2\delta_n+n^{-1}\log(n/c(x^\star))$, and
where $\epsilon_n\to 0$ as $n\to\infty$. 
Here,~\eqref{eq:cost_conv_cost_lb} follows because if 
$x^n \in T_Q$ and $Q(x^\star)>0$, then
$x_i=x^\star$ for at least one~$i$ and hence $c(x^n)\geq c(x^\star)/n>0$;
\eqref{eq:cost_conv_id} follows because $P^n(x^n) = 2^{-n(D(Q||P)+H(Q))}$ when
$x^n \in T_Q$, and because $\card{T_Q} \geq 2^{n(H(Q)-\delta_n)}$;
\eqref{eq:cost_converse_1867} follows from~\eqref{eq:renyi_variation} 
because the set of rational PMFs $Q$ with $Q(x^\star)>0$ 
is dense in the set of all PMFs on~$\set{X}$, and 
$H(Q)-D(Q||P)$ is continuous in $Q$ (provided that $Q(x)=0$ whenever $P(x)=0$,
which is certainly satisfied by the maximizing $Q$ in~\eqref{eq:renyi_variation}).
Combining~\eqref{eq:cost_converse_1867} and~\eqref{eq:cost_converse_1854}
completes the proof of the converse. 
\end{proof}

\appendices
\section{Proof of Proposition~\ref{prop:sufficiency}}
\label{sec:proposition}
Since the labels do not matter, we may assume for convenience of notation that
$\set{X}=\{1,\ldots,\card{\set{X}}\}$ and 
\begin{equation}
\label{eq:increasing}
\lambda(1) \leq \lambda(2) \leq \cdots \leq \lambda(\card{\set{X}}).
\end{equation}
We construct a partition of $\set{X}$ as follows. 
The first subset is
\begin{equation}
\label{eq:L0}
\set{L}_0=\{x\in\set{X}:\lambda(x) \geq \card{\set{X}}\}. 
\end{equation}
If $\set{X}=\set{L}_0$, then the construction is complete and~\eqref{eq:alpha} and~\eqref{eq:card_bound} are 
clearly satisfied. Otherwise we follow the steps below to construct additional
subsets
$\set{L}_1,\ldots,\set{L}_M$. (Note that if $\set{L}_0 \neq \set{X}$, then
$\set{X} \setminus \set{L}_0 = \{1,\ldots, \card{\set{X}}-\card{\set{L}_0}\}$.)
\begin{quote}
\emph{Step $1$}: If 
\begin{equation}
\card{\set{X} \setminus \set{L}_0} \leq
\lambda(1), 
\end{equation}
then we complete the construction by setting 
$\set{L}_{1} = \set{X}\setminus\set{L}_0$ and $M=1$. Otherwise 
we set
\begin{equation}
\set{L}_1 = \bigl\{1,\ldots,\lambda(1)\bigr\}
\end{equation}
and go to Step $2$.\\
\emph{Step $m \geq 2$}: If 
\begin{equation}
\biggl|\set{X} \setminus \bigcup_{i=0}^{m-1} \set{L}_i\biggr| \leq
\lambda(\card{\set{L}_1}+\ldots+\card{\set{L}_{m-1}}+1), 
\end{equation}
then we complete the construction by setting $\set{L}_{m} =
\set{X}\setminus\bigcup_{i=0}^{m-1} \set{L}_i$ and $M=m$. Otherwise 
we let~$\set{L}_{m}$ contain the
$\lambda(\card{\set{L}_1}+\ldots+\card{\set{L}_{m-1}}+1)$ smallest elements of
$\set{X}\setminus\bigcup_{i=0}^{m-1} \set{L}_i$, i.e., we set
\begin{multline}
\set{L}_m = \bigl\{\card{\set{L}_1}+\ldots+\card{\set{L}_{m-1}}+1,\ldots,\\
\card{\set{L}_1}+\ldots+\card{\set{L}_{m-1}}+\lambda(\card{\set{L}_1}+\ldots+\card{\set{L}_{m-1}}+1)\bigr\}
\end{multline}
and go to Step $m+1$.  
\end{quote}
We next verify that~\eqref{eq:card_bound} is satisfied and that
the total number of subsets $M+1$ does not exceed~\eqref{eq:alpha}. 
Clearly,~$L(x) \leq \card{\set{X}}$ for every $x\in \set{X}$, so to
prove~\eqref{eq:card_bound} we check
that $L(x) \leq \lambda(x)$ for every $x\in\set{X}$. 
From~\eqref{eq:L0} it is clear that $L(x) \leq
\lambda(x)$ for all $x\in \set{L}_0$. Let $k(x)$ denote the smallest element in
the subset containing $x$. Then $L(x) \leq \lambda(k(x))$ for all $x\in
\bigcup_{m=1}^M \set{L}_m$ by construction (the inequality can be strict only if $x\in \set{L}_M$), and
since~$k(x) \leq x$, we have $\lambda(k(x)) \leq \lambda(x)$ by the
assumption~\eqref{eq:increasing}, and hence $L(x) \leq \lambda(x)$ for all $x\in
\set{X}$. 

It remains to
check that $M+1$ does not exceed~\eqref{eq:alpha}. This is clearly true when
$M=1$, so we assume that~$M \geq 2$. 
Fix an arbitrary $\alpha>1$ and let $\set{M}$ be the set of indices $m \in \{1,\ldots,M-1\}$ such that 
there is an~$x\in \set{L}_m$ with~$\lambda(x) > \alpha \lambda(k(x))$. We next show that
\begin{equation}
\label{eq:bound_on_M_1244}
\card{\set{M}} < \log_\alpha \card{\set{X}}. 
\end{equation}
To this end, enumerate the indices in $\set{M}$ as $m_1<m_2 < \cdots < m_{\card{\set{M}}}$.
For each $i\in \{1,\ldots,\card{\set{M}}\}$ select some~$x_{i}\in \set{L}_{m_i}$
for which $\lambda(x_{i}) > \alpha \lambda(k(x_i))$. Then 
\begin{align}
\lambda(x_{1}) &> \alpha\lambda(k(x_1))\\
&\geq \alpha.
\end{align}
Note that if $m<m'$ and $x\in \set{L}_m$ and $x'\in\set{L}_{m'}$, then $x<x'$.
Thus, $x_1<k(x_2)$ because $x_1 \in \set{L}_{m_1}$, $k(x_2) \in \set{L}_{m_2}$,
and $m_1<m_2$. Consequently, $\lambda(x_1) \leq \lambda(k(x_2))$ and hence
\begin{align}
\label{eq:632_24}
\lambda(x_{2}) &> \alpha \lambda(k(x_{2}))\\
&\geq \alpha \lambda(x_{1})\\
&> \alpha^2.
\end{align}
Iterating this argument shows that
\begin{align}
\label{eq:lambda_alpha_1238}
\lambda(x_{\card{\set{M}}})> \alpha^{\card{\set{M}}}.
\end{align}
And since $\lambda(x) < \card{\set{X}}$ for every $x\notin \set{L}_0$ by~\eqref{eq:L0}, 
the desired inequality~\eqref{eq:bound_on_M_1244} follows from~\eqref{eq:lambda_alpha_1238}.
Let $\comp{\set{M}}$ denote the complement of $\set{M}$ in $\{1,\ldots, M-1\}$.
Using Proposition~\ref{prop:count_lists} and the fact that $L(x) =
\lambda(k(x))\geq \lambda(x)/\alpha$ for all $x\in \bigcup_{m \in
\setcomp{\set{M}}} \set{L}_m$, 
\begin{align}
M &=  \sum_{x\in\bigcup_{m=1}^M \set{L}_m}
\frac{1}{L(x)}\label{eq:count_sets_proof_1307}\\
&= 1+\card{\set{M}}+ \sum_{x\in \bigcup_{m\in\setcomp{\set{M}}}\set{L}_m} \frac{1}{L(x)}\\
&\leq 1+\card{\set{M}}+ \alpha\sum_{x\in \bigcup_{m\in\setcomp{\set{M}}}\set{L}_m}
\frac{1}{\lambda(x)}\\
&< 1 + \log_\alpha \card{\set{X}} + \alpha \mu,\label{eq:723_final}
\end{align}
where~\eqref{eq:723_final} follows from~\eqref{eq:bound_on_M_1244}
and the hypothesis of the proposition~\eqref{eq:sum_of_recip_equals_mu}.
Since $M+1$ is an integer and $\alpha>1$ is arbitrary, it follows
from~\eqref{eq:count_sets_proof_1307}--\eqref{eq:723_final} that $M+1$ is upper-bounded
by~\eqref{eq:alpha}.\qed


\section{Proof of~\eqref{eq:cond_renyi_identity}}
\label{sec:cond_renyi_identity}
We first show that $H(V|Q) - \rho^{-1} D(Q\circ V||P_{X,Y}) \leq
H_{\rhotrans}(X_1|Y_1)$ for 
every $Q \in \set{P}(\set{Y})$ and $V \in \set{P}(\set{X}|\set{Y})$.  This is
clearly true when $D(Q\circ V||P_{X,Y})=\infty$, so we may assume that
$P_{X,Y}(x,y)=0$ implies $Q(y)V(x|y)=0$, and hence that $P_Y(y)=0$ implies
$Q(y)=0$. Now observe that
\begin{align}
&H(V|Q) - \rho^{-1} D(Q\circ V||P_{X,Y})\notag\\
&=\frac{1+\rho}{\rho} \sum_{y\in\set{Y}}
Q(y) \sum_{x\in\set{X}} V(x|y) \log
\frac{P_{X|Y}(x|y)^{\frac{1}{1+\rho}}}{V(x|y)}\notag\\
&\qquad- \frac{1}{\rho} \sum_{y\in\set{Y}}
Q(y) \log \frac{Q(y)}{P_Y(y)}\\
&\leq \frac{1+\rho}{\rho} \sum_{y\in\set{Y}} Q(y) \log \sum_{x\in\set{X}}
P_{X|Y}(x|y)^{\frac{1}{1+\rho}}\notag\\
&\qquad- \frac{1}{\rho}\sum_{y\in\set{Y}} Q(y) \log
\frac{Q(y)}{P_Y(y)}\label{eq:cond_renyi_identity_proof_1}\\
&= \frac{1}{\rho} \sum_{y\in\set{Y}} Q(y) \log \frac{P_Y(y) \bigl(\sum_{x\in\set{X}}
P_{X|Y}(x|y)^{\frac{1}{1+\rho}}\bigr)^{1+\rho}}{Q(y)}\\
&\leq \frac{1}{\rho} \log \sum_{y\in \set{Y}} P_Y(y) \biggl(\sum_{x\in\set{X}}
P_{X|Y}(x|y)^{\frac{1}{1+\rho}}\biggr)^{1+\rho}\label{eq:cond_renyi_identity_proof_2}\\
&= H_{\rhotrans}(X_1|Y_1),
\end{align}
where~\eqref{eq:cond_renyi_identity_proof_1}
and~\eqref{eq:cond_renyi_identity_proof_2} follow from Jensen's Inequality. 
The proof is completed by noting that equality is attained in both inequalities 
by the choice
\begin{equation}
Q(y) = \frac{P_Y(y) \bigl(\sum_{x\in\set{X}} P_{X|Y}(x|y)^{\frac{1}{1+\rho}}
\bigr)^{1+\rho}}{\sum_{y' \in \set{Y}} P_Y(y') \bigl(\sum_{x\in\set{X}}
P_{X|Y}(x|y')^{\frac{1}{1+\rho}}\bigr)^{1+\rho}},
\end{equation}
and
\begin{equation}
\label{eq:cond_renyi_identity_proof_3}
V(x|y) = \frac{P_{X|Y}(x|y)^{\frac{1}{1+\rho}}}{\sum_{x'\in\set{X}}
P_{X|Y}(x'|y)^{\frac{1}{1+\rho}}}, \quad Q(y)>0.
\end{equation}
(Note that $P(y)>0$ when $Q(y)>0$ so the RHS
of~\eqref{eq:cond_renyi_identity_proof_3} makes sense. How we define $V(x|y)$ when $Q(y)=0$ does not matter.) \qed

%


\ifCLASSOPTIONcaptionsoff
  \newpage
\fi



\bibliographystyle{IEEEtran}
\bibliography{IEEEabrv,tasks_jrnl}
\end{document}